\definecolor{darkgreen}{rgb}{0,0.6,0}
\theoremstyle{plain}
\newtheorem{theorem}{Theorem}[section]
\newtheorem{lemma}[theorem]{Lemma}
\newtheorem{proposition}[theorem]{Proposition}
\newtheorem{corollary}[theorem]{Corollary}
\theoremstyle{definition}
\newtheorem{definition}[theorem]{Definition}
\newtheorem{remark}[theorem]{Remark}
\DeclareMathOperator{\val}{val}
\DeclareMathOperator\wronskian{wr}
\newcommand\Fcal{\mathcal F}
\newcommand\Qcal{\mathcal Q}
\newcommand\FFp{\mathbb F_{\!p}}
\newcommand\FFps{\mathbb F_{\!p^s}}
\newcommand\KK{\mathbb K}
\newcommand\QQ{\mathbb Q}
\newcommand\QQp{\mathbb Q_p}
\newcommand\CC{\mathbb C}
\newcommand\Qbar{\overline\QQ}
\newcommand\ZZ{\mathbb Z}
\newcommand\NP{\mathsf{NP}}
\newcommand\mult[2]{\mu_{#1}(#2)}
\newcommand\Theorem[1]{Theorem~\ref{thm:#1}}
\newcommand\Definition[1]{Definition~\ref{def:#1}}
\newcommand\Lemma[1]{Lemma~\ref{lemma:#1}}
\newcommand\Proposition[1]{Proposition~\ref{prop:#1}}
\newcommand\Equation[1]{Equation~\eqref{eq:#1}}
\newcommand\Section[1]{Section~\ref{sec:#1}}
\newcommand\doi[1]{\textsc{doi}:\href{http://doi.org/#1}{\nolinkurl{#1}}}
\title{Computing the multilinear factors of lacunary polynomials without heights}
\author{
     Arkadev Chattopadhyay\thanks{School of Technology and Computer Science, Tata Institute of Fundamental Research, Mumbai, India; \texttt{arkadev.c@tifr.res.in}.}
\and Bruno Grenet\thanks{LIRMM, Univ. Montpellier, CNRS, Montpellier, France; \texttt{bruno.grenet@lirmm.fr}.}
\and Pascal Koiran\thanks{Univ Lyon, EnsL, UCBL, CNRS, LIP, F-69342, LYON Cedex 07, France; \texttt{[pascal.koiran,natacha.portier]@ens-lyon.fr}.}
\and Natacha Portier\footnotemark[3]
\and Yann Strozecki\thanks{DAVID laboratory, Université de Versailles Saint-Quentin, France; \texttt{yann.strozecki@uvsq.fr}.}
}
\date{}
\begin{document}
\maketitle

\begin{abstract}
We present a deterministic algorithm which computes the multilinear factors of multivariate lacunary polynomials over number fields. Its complexity is polynomial in $\ell^n$ where $\ell$ is the lacunary size of the input polynomial and $n$ its number of variables, that is in particular polynomial in the logarithm of its degree. We also provide a randomized algorithm for the same problem of complexity polynomial in $\ell$ and $n$.

Over other fields of characteristic zero and finite fields of large characteristic, our algorithms compute the multilinear factors having at least three monomials of multivariate polynomials. Lower bounds are provided to explain the limitations of our algorithm. As a by-product, we also design polynomial-time deterministic polynomial identity tests for families of polynomials which were not known to admit any.

Our results are based on so-called Gap Theorem which reduce high-degree factorization to repeated low-degree factorizations. While previous algorithms used Gap Theorems expressed in terms of the heights of the coefficients, our Gap Theorems only depend on the exponents of the polynomials. This makes our algorithms more elementary and general, and faster in most cases.
\end{abstract}

\section{Introduction} 

The \emph{lacunary representation} 
of a polynomial \[P(X_1,\dotsc,X_n)=\sum_{j=1}^k a_j X_1^{\alpha_{1,j}}\dotsm X_n^{\alpha_{n,j}}\]
is the list of the tuples $(a_j, \alpha_{1,j},\dotsc,\alpha_{n,j})$ for $1\le j\le k$. 
This representation, sometimes called \emph{sparse} or \emph{supersparse representation} in the literature, 
allows very high degree polynomials to be represented in a concise manner. The factorization of lacunary polynomials has been investigated in a series of papers. \citet{CuKoiSma99} 
first proved that integer roots of univariate integer lacunary polynomials can be found in polynomial time. 
This result was generalized by \citet{Len99} 
who proved that low-degree factors of univariate lacunary polynomials over algebraic number fields can also be found in polynomial time. 
More recently, \citet{KaKoi05,KaKoi06} 
generalized Lenstra's results to bivariate and then multivariate lacunary polynomials. 
A common point to these algorithms is that they all rely on a so-called \emph{Gap Theorem}: If $F$ is a factor of $P(X)=\sum_{j=1}^k a_j X^{\alpha_j}$, then there exists $\ell$ such that $F$ is a factor of both $\sum_{j=1}^{\ell} a_j X^{\alpha_j}$ and $\sum_{j=\ell+1}^k a_j X^{\alpha_j}$. (Here, $X$ is a vector of variables of length at least $1$, and the $\alpha_j$'s are vectors 
of exponents.) Moreover, the different Gap Theorems in these papers are all based on the notion of height of an algebraic number, and some of them use quite sophisticated results of number theory.

In this paper, we are interested in more elementary proofs for these results. We first focus on the case of linear factors of bivariate lacunary polynomials as  
\citet{KaKoi05}. Yet unlike their result our algorithm works over number fields, and is extended to multivariate polynomials, and to the computation of multilinear factors, with multiplicities. This was investigated by 
    \citet{KaKoi06}, who also dealt with low-degree factors. \citet{Gre14,Gre16} has generalized our approach to low-degree factors, subsuming all known previous results.

We prove a new Gap Theorem that does not depend on the height of an algebraic number but only on the exponents. In particular, our Gap Theorem is valid for any field of characteristic zero and we also extend it to the case of multilinear factors of multivariate polynomials. As a result, we get a new, more elementary algorithm for finding multilinear factors of multivariate lacunary polynomials over an algebraic number field. In particular, this new algorithm is easier to implement (see \citep{Gre15} for a recent implementation)
since there is no need to explicitly compute some constants from number theory, and the use of the Gap Theorem does not require the evaluation of 
the heights of the coefficients of the polynomial. 
We also compute the multiplicities of the factors. 
With our method this comes for free, which makes our algorithm faster than the previous ones in most cases (see Section~\ref{sec:comparison} for details).

Our algorithm can also be used for absolute factorization, that is the factorization of a polynomial in an algebraic closure of the field generated by its coefficients. More precisely, it can be used to compute in polynomial time the multilinear factors with at least three monomials of a lacunary multivariate polynomial. Note that univariate factorization reduces to the computation of binomial factors. And since the absolute factorization of a univariate polynomial of degree $d$ is a product of $d$ linear factors, these factors cannot even be listed in polynomial time. We shall also discuss the application of our algorithms to other fields of characteristic zero.

We use the same methods to prove a Gap Theorem for polynomials over some fields of positive characteristic, yielding an algorithm to find multilinear factors of multivariate lacunary polynomials 
with at least three monomials. We show that detecting the existence of binomial factors, that is factors with exactly two monomials, is $\NP$-hard.
This follows from the fact that  finding univariate linear factors over finite fields is $\NP$-hard~\citep{KiSha99,BiCheRo12,KaLe13}. 
In algebraic number fields we can find \emph{all} multilinear factors in polynomial time, even the binomial ones. For this we rely, as Kaltofen and Koiran did, 
on Lenstra's univariate algorithm~\citep{Len99}.

Our Gap Theorems are based on the \emph{valuation} of a univariate polynomial, that is the maximum integer $v$ such that $X^v$ divides the polynomial. We give an upper bound on the valuation of a nonzero polynomial
\[P(X)=\sum_{j=1}^k a_j X^{\alpha_j}(vX+t)^{\beta_j}(uX+w)^{\gamma_j}\text.\]
This bound can be viewed as an extension of a result due to 
\citet[see also \citealp{MoSch77}]{Hajos}. 
We also note that \citet{KaSa11} 
recently used the valuation of square roots of polynomials to make some progress on the ``Sum of Square Roots'' problem. 

Lacunary polynomials have been studied with respect to other computational tasks. For instance, \citet{Pla77} 
showed the $\NP$-hardness of computing the greatest common divisor (gcd) of two univariate integer lacunary polynomials. 
His results were extended to prove that testing the irreducibility of a lacunary polynomial is $\NP$-hard for polynomials with coefficients 
in $\ZZ$ or in a finite field~\citep{vzGaKaShp96,KaShp99,KaKoi05}. On the other hand, some 
efficient algorithms for lacunary polynomials have also been 
given, for instance for testing if a polynomial vanishes on roots of unity~\citep{Che07}, if a polynomial has a real root~\citep{BiRoSte09}, or 
for the detection of perfect powers~\citep{GieRo08,GieRo11} or interpolation~\citep{KaNe11}. 

Another approach for computing with lacunary polynomials is to give algorithms with a polynomial complexity in the logarithm of the degree (that is in the size of the exponents) but not in the number of terms or the size of the coefficients. 
This approach has been used to circumvent Plaisted's $\NP$-hardness result on the gcd~\citep{FiGraSch08,AmLeSo15}.

Note that for all the problems we address, there exist algorithms with a polynomial complexity in the degree of the polynomials. They are used as subroutines of our algorithms. We refer the reader to~\citep{vzGaGe13} for details and references on these algorithms. 

A preliminary version of this paper was published in the conference ISSAC 2013~\citep{ChaGreKoiPoStr13} that contains the bivariate case of our results. The present paper gives more details on the algorithms especially for the computation of the multiplicities of the factors, and the generalization to multivariate polynomials is new. In positive characteristic, it includes a more general $\NP$-hardness result. We also give a new Polynomial Identity Testing algorithm for sums of products of dense polynomials.

\subsection*{Organization of the paper}
\Section{wrkval} is devoted to our main technical results. In \Section{val}, we give a bound on the valuation of a nonzero polynomial $P=\sum_ja_jX^{\alpha_j}(uX+v)^{\beta_j}$ over a field of characteristic $0$. This result is extended in \Section{val:gen} to polynomials of the form $\sum_j a_j\prod_i f_i^{\alpha_{ij}}$ where the $f_i$'s are low-degree polynomials. In \Section{tight}, we discuss the tightness of these results. 

In \Section{gap}, we use these valuation bounds to get new Gap Theorems, respectively adapted to linear and multilinear factors. We also give in \Section{pit} a first application of these Gap Theorems: We give polynomial identity testing algorithms for the above mentioned families of polynomials.

\Section{algo} presents our main application: the factorization of lacunary polynomials. We begin with the computation of linear factors of bivariate polynomials over a number field in \Section{linear}, and then extend it to multilinear factors in \Section{multilinear}. Then, we generalize these algorithms to multivariate polynomials in \Section{nvar}. We briefly discuss absolute factorization and factorization in other fields of characteristic zero in \Section{otherFields}. \Section{comparison} presents a comparison between our techniques and existing ones.

Finally, the case of positive characteristic is investigated in \Section{posChar}. We show how to partially extend the results of \Section{wrkval} to positive characteristic, and we give similar algorithms as in \Section{algo}. We note that these algorithms are less general, but we also give $\NP$-hardness results explaining this lack of generality.

To understand the basics of our method, one can focus on the computation of linear factors of bivariate polynomials over a number field. To this end, one only has to read \Section{val}, \Theorem{gap} and its proof in \Section{gap}, and \Section{linear}.

\subsection*{Acknowledgments}
We wish to thank Sébastien Tavenas for his help on Proposition~\ref{prop:LowerBound}, 
and Erich L. Kaltofen for pointing us out a mistake in a previous version of Theorem~\ref{thm:FactorPosChar}.
We are grateful to the two anonymous reviewers for very interesting remarks and suggestions that improved the presentation of this paper.

\section{Wronskian and valuation}\label{sec:wrkval} 

In this section, we consider a field $\KK$ of characteristic zero and polynomials over $\KK$.
We denote by $\val(P)$ the valuation of a univariate polynomial $P$ over $\KK$, that is the largest $v$ such that $X^v$ divides $P$.  More generally, we denote by $\mult{F}{P}$ the \emph{multiplicity of $F$ as a factor of $P$}, that is the largest $m$ such that $F^m$ divides $P$. In particular, $\val(P) = \mult{X}{P}$ for $P\in\KK[X]$.

\subsection{Valuation upper bound}\label{sec:val} 

\begin{theorem}\label{thm:val}
Let $P=\sum_{j=1}^{\ell} a_j X^{\alpha_j} (uX+v)^{\beta_j}$ with $\alpha_1\leq \dotsb \leq \alpha_\ell$ and $uv\neq0$.
If $P$ is nonzero, its valuation is at most $\max_j (\alpha_j+\binom{\ell+1-j}{2})$.
\end{theorem}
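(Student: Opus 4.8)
The plan is to induct on $\ell$, using the Wronskian of suitable functions to control the valuation. The starting observation is that $P$ is a $\KK$-linear combination of the $\ell$ functions $f_j = X^{\alpha_j}(uX+v)^{\beta_j}$, and we want to bound $\val(P)$ in terms of the $\alpha_j$. The natural move is to pass to the Wronskian $W = \wronskian(f_1,\dots,f_\ell)$: since $P$ is nonzero and the $f_j$ are linearly independent (they have distinct behaviour — one should check this, e.g. by a degree/exponent argument, else pass to a maximal independent subfamily), $W$ is a nonzero polynomial, and there is a standard inequality relating $\val(P)$ to $\val(W)$ and the valuations of the $f_j$. Concretely, one expands $W$ along the row corresponding to $P$ (after replacing $f_1$, say, by $P$ via column operations, which does not change the Wronskian up to the scalar coefficient of $f_1$ in $P$), obtaining $\val(P) \le \val(W) - \sum_{j\ge 2}(\val(f_j) - (\text{derivative shifts}))$, and since $f_j = X^{\alpha_j}(uX+v)^{\beta_j}$ with $uv\neq 0$ we have $\val(f_j) = \alpha_j$ exactly.

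First I would compute $\val(W)$ directly. Writing $f_j = X^{\alpha_j} g_j$ with $g_j = (uX+v)^{\beta_j}$, each of which satisfies $g_j(0)\neq 0$ since $v\neq 0$, one factors $X^{\alpha_1 + \dots + \alpha_\ell}$ (roughly) out of the columns of the Wronskian matrix; more carefully, differentiating $X^{\alpha_j}g_j$ a total of $i$ times produces a factor of $X^{\alpha_j - i}$ times something regular at $0$, so the $(i,j)$ entry is divisible by $X^{\alpha_j - i + 1}$ when $\alpha_j \ge i-1$... and one has to be careful when $\alpha_j < i$. The cleanest bookkeeping: order so that $\alpha_1 \le \dots \le \alpha_\ell$ (as in the hypothesis) and extract from the $j$-th column the power $X^{\max(0,\alpha_j - (\ell-1))}$, and from the $i$-th row additional powers; tallying these gives a lower bound on $\val(W)$, and then the relation above converts it to an upper bound on $\val(P)$. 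The binomial $\binom{\ell+1-j}{2} = \binom{\ell-j}{2} + (\ell-j)$ is exactly what appears when you sum the derivative-shift contributions $0,1,\dots,\ell-j$ for the block of columns/rows indexed from $j$ onward.

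Alternatively — and this may be the slicker route to actually get the stated bound with the $\max_j$ — I would induct on $\ell$ by a reduction that peels off the smallest exponent. Factor $X^{\alpha_1}$ out of $P$: then $\val(P) = \alpha_1 + \val(Q)$ where $Q = \sum_j a_j X^{\alpha_j - \alpha_1}(uX+v)^{\beta_j}$. If $\val(Q) = 0$ we are done since $\alpha_1 \le \max_j(\alpha_j + \binom{\ell+1-j}{2})$. Otherwise $Q(0) = \sum_j a_j v^{\beta_j} \cdot [\alpha_j = \alpha_1] = 0$ gives a linear relation. The better reduction differentiates: $Q'$ is again of the same shape (with $\ell$ terms after expanding $(uX+v)^{\beta_j - 1}\cdot(\dots)$, since $\frac{d}{dX}[X^m(uX+v)^n] = X^{m-1}(uX+v)^{n-1}(\dots)$ has the right form), $Q'$ is nonzero if $Q$ is nonconstant, and $\val(Q) \le \val(Q') + 1$; but this doesn't reduce $\ell$. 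To reduce $\ell$, apply the operator $\theta = (uX+v)\frac{d}{dX} - \frac{u\beta_1}{\alpha_1}(\dots)$ — some first-order operator annihilating the term $j=1$ while preserving the shape of the others — dropping to $\ell - 1$ terms; track how each application shifts the $\alpha_j$'s down and how much valuation is lost at each step, and the recursion $V(\ell) \le V(\ell-1) + (\ell-1)$ on top of the exponent shifts produces the triangular number $\binom{\ell+1-j}{2}$.

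The main obstacle I anticipate is the bookkeeping at the boundary: when some $\alpha_j$ are small (smaller than $\ell$), the naive "extract $X^{\alpha_j - i}$ from entry $(i,j)$" breaks down because the exponent would go negative, and it is precisely this interplay that forces the $\max_j(\alpha_j + \binom{\ell+1-j}{2})$ form rather than a clean sum. I would handle this by treating the indices with $\alpha_j$ "large enough" and "too small" separately, or equivalently by always working with the differentiated/reduced polynomial whose exponents have been normalized, and arguing that the loss incurred is dominated by the claimed bound term-by-term. Linear independence of the $f_j$ (needed for $W\neq 0$) also deserves a line: the $f_j$ are pairwise distinct monomial-times-power expressions, and over characteristic zero with $uv\neq 0$ a dependence would contradict comparing valuations/degrees, or one simply restricts to a maximal independent subset without increasing the bound.
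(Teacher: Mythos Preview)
Your Wronskian approach is the paper's approach in outline, but there is a genuine gap at the heart of it. After replacing $f_1$ by $P$ in the first column you correctly obtain
\[
\val(P)\le \val(W)-\sum_{j\ge 2}\alpha_j+\binom{\ell}{2},
\]
so what you need is an \emph{upper} bound on $\val(W)$. Your paragraph ``compute $\val(W)$ directly'' only extracts powers of $X$ from the entries and therefore only yields a \emph{lower} bound on $\val(W)$; the sentence ``tallying these gives a lower bound on $\val(W)$, and then the relation above converts it to an upper bound on $\val(P)$'' has the direction reversed. The missing ingredient is the paper's Lemma on $\val(W)\le\sum_j\alpha_j$: one uses the Leibniz rule to factor out \emph{both} $X^{\alpha_j-\ell}(uX+v)^{\beta_j-\ell}$ from each column \emph{and} $X^{\ell-i}(uX+v)^{\ell-i}$ from each row, leaving a residual determinant of total degree at most $\binom{\ell}{2}$, whose valuation therefore cannot exceed $\binom{\ell}{2}$. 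The second factor $(uX+v)$ is essential here---factoring only powers of $X$ cannot give an upper bound. The assumption $\alpha_j,\beta_j\ge\ell$ needed for this factoring is arranged simply by multiplying $P$ by $X^\ell(uX+v)^\ell$, which handles your ``small $\alpha_j$'' obstacle in one stroke.

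Your reading of where the $\max_j$ comes from is also off. It is not boundary bookkeeping: under the two assumptions above one gets the clean bound $\val(P)\le\alpha_1+\binom{\ell}{2}$. The $\max_j$ enters only when the $f_j$ are dependent and you pass to a basis $f_{j_1},\dots,f_{j_d}$; since $j_1+d-1\le\ell$ one has $d\le\ell+1-j_1$, whence $\val(P)\le\alpha_{j_1}+\binom{d}{2}\le\alpha_{j_1}+\binom{\ell+1-j_1}{2}$, and taking the max over the unknown $j_1$ gives the stated bound. Your alternative inductive route via a first-order operator annihilating one term is too vague as written; in particular such an operator does not preserve the form $X^\alpha(uX+v)^\beta$ term-by-term without introducing extra summands, so the reduction from $\ell$ to $\ell-1$ is not clean.
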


Our proof of \Theorem{val} is based on the so-called \emph{Wronskian} of a family of polynomials. This is a classical tool for the study of differential equations but it has also been used 
in the field of algebraic complexity~\citep{GriKa93, LaSa96, KaSa11,KoiPoTa15,FoSaShp14}.

\begin{definition}
Let $f_1,\dotsc,f_\ell\in\KK[X]$. Their \emph{Wronskian} is the determinant of the \emph{Wronskian matrix}
\[\wronskian(f_1,\dotsc,f_\ell)=\det\begin{bmatrix}
    f_1             & f_2               & \dotsb & f_\ell   \\
    f_1'            & f_2'              & \dotsb & f_\ell'  \\
    \vdots          & \vdots            &        & \vdots   \\
    f_1^{(\ell-1)}  & f_2^{(\ell-1)}    & \dotsb & f_\ell^{(\ell-1)} 
\end{bmatrix}.\]
\end{definition}

The main property of the Wronskian is its relation to linear independence. 
The following result is classical (see \citep{BoDu10} for a simple proof of this fact).

\begin{proposition}\label{prop:wronskian}
The Wronskian of any polynomials $f_1,\dotsc,f_\ell\in\KK[X]$ 
is nonzero if and only if the $f_j$'s are linearly independent over $\KK$.
\end{proposition}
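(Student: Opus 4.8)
The plan is to prove Proposition~\ref{prop:wronskian} by induction on $\ell$, working in the field of rational functions $\KK(X)$; this loses nothing, since $\KK[X]\subseteq\KK(X)$ and $\KK$-linear (in)dependence of a finite family does not depend on the ambient space. One direction is easy: if $\sum_{j}c_jf_j=0$ is a nontrivial $\KK$-linear relation, then differentiating $i$ times and using that each $c_j$ is a constant for $\frac{d}{dX}$ gives $\sum_j c_j f_j^{(i)}=0$ for $i=0,\dots,\ell-1$, so the columns of the Wronskian matrix satisfy the same nontrivial relation and $\wronskian(f_1,\dots,f_\ell)=0$. All the content is therefore in the converse: vanishing of the Wronskian must force a linear dependence.

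For the converse, the base case $\ell=1$ is immediate since $\wronskian(f_1)=f_1$. For the inductive step, assume the statement for families of size $\ell-1$ and suppose $\wronskian(f_1,\dots,f_\ell)=0$. If $f_1=0$ the family is trivially dependent, so assume $f_1\neq0$ and set $h_j=f_j/f_1\in\KK(X)$, so that $h_1=1$. The key tool is the classical reduction identity
\[
\wronskian(f_1,f_2,\dots,f_\ell)=f_1^{\,\ell}\;\wronskian\!\bigl(h_2',h_3',\dots,h_\ell'\bigr),
\]
obtained by expanding each $f_j^{(k)}=\sum_m\binom{k}{m}f_1^{(k-m)}h_j^{(m)}$ via Leibniz, clearing the lower-order terms by row operations to reach $\wronskian(f_1,\dots,f_\ell)=f_1^{\,\ell}\wronskian(1,h_2,\dots,h_\ell)$, and then expanding the determinant along the first column $(1,0,\dots,0)^{T}$. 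Since $f_1\neq0$, the hypothesis gives $\wronskian(h_2',\dots,h_\ell')=0$, a Wronskian of $\ell-1$ rational functions, so the induction hypothesis yields $c_2,\dots,c_\ell\in\KK$, not all zero, with $\sum_{j=2}^{\ell}c_jh_j'=0$, i.e. $\bigl(\sum_{j=2}^{\ell}c_jh_j\bigr)'=0$. In characteristic zero an element of $\KK(X)$ with zero derivative lies in $\KK$, so $\sum_{j=2}^{\ell}c_jh_j=c_1$ for some $c_1\in\KK$; multiplying through by $f_1$ yields the nontrivial relation $c_1f_1-\sum_{j=2}^{\ell}c_jf_j=0$, completing the induction.

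The only real obstacle is the converse direction, and within it the two places where characteristic zero is used: that $\frac{d}{dX}$ kills only the constants $\KK$ of $\KK(X)$ (this fails in positive characteristic, which is precisely why \Section{posChar} needs a separate treatment), and the reduction identity, which is a purely formal manipulation but must be stated carefully because of the division by $f_1$. An alternative route avoiding the identity is to assume $f_1,\dots,f_{\ell-1}$ independent (otherwise we are done by induction), hence with nonzero Wronskian $W$, form the monic order-$(\ell-1)$ differential operator $L(y)=\wronskian(f_1,\dots,f_{\ell-1},y)/W$, observe that $f_1,\dots,f_{\ell-1}$ and also $f_\ell$ solve $L(y)=0$, and invoke that the solution space of a monic linear ODE of order $m$ over a characteristic-zero differential field is a $\KK$-space of dimension at most $m$. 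I prefer the first route, since that last fact is itself usually proved via Wronskians and would need a separate argument to avoid circularity.
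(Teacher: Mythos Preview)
The paper does not actually supply a proof of this proposition: it merely states that the result is classical and cites~\cite{BoDu10} for a simple proof. So there is no in-paper argument to compare against.

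Your argument is correct. The easy direction is standard, and for the converse you run the classical reduction $\wronskian(f_1,\dots,f_\ell)=f_1^{\ell}\,\wronskian(h_2',\dots,h_\ell')$ with $h_j=f_j/f_1$, which is exactly what one obtains from the Leibniz factorisation $M=LH$ with $L$ lower triangular of determinant $f_1^{\ell}$, followed by cofactor expansion along the first column of $\wronskian(1,h_2,\dots,h_\ell)$. Your care in passing to $\KK(X)$ before starting the induction is the right move, since the $h_j'$ need not be polynomials; the stronger statement over $\KK(X)$ specialises back to the proposition. The two genuinely characteristic-zero steps you flag---that $\ker\frac{d}{dX}$ on $\KK(X)$ equals $\KK$, and implicitly that the diagonal entries $f_1$ of $L$ are invertible in $\KK(X)$---are exactly the points where the argument would break in positive characteristic, consistent with the paper's later need for Proposition~\ref{prop:positive}.

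For what it is worth, the proof in~\cite{BoDu10} that the paper points to is organised slightly differently (it avoids passing to $\KK(X)$ and the quotient identity), but your approach is equally standard and arguably more conceptual; it is essentially Liouville's original reduction.
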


To prove \Theorem{val}, an easy lemma on the valuation of the Wronskian is needed.

\begin{lemma}\label{lemma:valinf}
Let $f_1,\dotsc,f_\ell\in\KK[X]$. Then
\[\val(\wronskian(f_1,\dotsc,f_\ell))\ge \sum_{j=1}^\ell \val(f_j)-\binom{\ell}{2}.\]
\end{lemma}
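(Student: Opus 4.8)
The plan is to track the lowest-degree term contributed by each column of the Wronskian matrix and show that, when one expands the determinant, every term of the resulting polynomial has valuation at least $\sum_j \val(f_j) - \binom{\ell}{2}$. Write $v_j = \val(f_j)$, so that $f_j = c_j X^{v_j} + (\text{higher order terms})$ with $c_j \neq 0$ (if some $f_j = 0$ the Wronskian is zero and has valuation $+\infty$, so the inequality holds trivially; assume all $f_j \neq 0$). The key elementary observation is that differentiating a polynomial lowers its valuation by at most one: $\val(g') \ge \val(g) - 1$, hence by induction $\val(f_j^{(i)}) \ge v_j - i$ for every $i \ge 0$.

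The main step is then a term-by-term estimate of the determinant expansion. Expanding along the Leibniz formula,
\[
\wronskian(f_1,\dotsc,f_\ell) = \sum_{\sigma \in S_\ell} \operatorname{sgn}(\sigma) \prod_{i=1}^{\ell} f_{\sigma(i)}^{(i-1)}.
\]
For a fixed permutation $\sigma$, the valuation of the corresponding product is at least $\sum_{i=1}^\ell \bigl(v_{\sigma(i)} - (i-1)\bigr) = \sum_{j=1}^\ell v_j - \sum_{i=1}^\ell (i-1) = \sum_{j=1}^\ell v_j - \binom{\ell}{2}$, using that $\sigma$ is a bijection so $\sum_i v_{\sigma(i)} = \sum_j v_j$. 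Since every summand in the Leibniz expansion has valuation at least this common bound, and the valuation of a sum is at least the minimum of the valuations of the summands, the Wronskian itself — if nonzero — has valuation at least $\sum_j \val(f_j) - \binom{\ell}{2}$, which is exactly the claim.

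There is essentially no hard part here: the only things to be careful about are the degenerate cases (some $f_j = 0$, or cancellation making the whole Wronskian zero), which are handled by the convention $\val(0) = +\infty$, and the bookkeeping that the $i$-th row carries $(i-1)$ derivatives so the total derivative count across a product is $0 + 1 + \dots + (\ell-1) = \binom{\ell}{2}$. One might remark that this bound, combined with Proposition~\ref{prop:wronskian}, is what will be used in the proof of Theorem~\ref{thm:val}: applied to $\ell$ linearly independent polynomials $f_j = X^{\alpha_j}(uX+v)^{\beta_j}$ whose valuations are the $\alpha_j$, it forces $\val(\wronskian) \ge \sum_j \alpha_j - \binom{\ell}{2}$, while an independent upper bound on $\val(\wronskian)$ coming from the structure of these particular $f_j$ will pin down the valuation of $P$.
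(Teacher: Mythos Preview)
Your proof is correct and follows essentially the same approach as the paper's: both expand the determinant via the Leibniz formula, use that $\val(f_j^{(i)})\ge\val(f_j)-i$, and observe that each term of the expansion therefore has valuation at least $\sum_j\val(f_j)-\sum_{i=0}^{\ell-1}i=\sum_j\val(f_j)-\binom{\ell}{2}$. Your write-up is simply more explicit about the bookkeeping and the degenerate cases.
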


\begin{proof}
Each term of the determinant is a product of $\ell$ terms, one from each column and one from each row. The valuation of such a term is at least $\sum_j\val(f_j)-\sum_{i=1}^{\ell-1} i$ since for all $i$, $j$, $\val(f_j^{(i)})\ge\val(f_j)-i$. The result follows.
\end{proof}

The previous lemma is combined with a bound on the valuation of a specific Wronskian.

\begin{lemma}\label{lemma:valsup}
Let $f_j=X^{\alpha_j}(uX+v)^{\beta_j}$, $1\le j\le \ell$, such that $\alpha_j,\beta_j\ge \ell$ for all $j$ and $uv\neq0$. If the $f_j$'s are linearly independent, then
\[\val(\wronskian(f_1,\dotsc,f_\ell))\le \sum_{j=1}^\ell\alpha_j.\]
\end{lemma}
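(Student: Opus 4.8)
The plan is to compute the Wronskian $\wronskian(f_1,\dots,f_\ell)$ explicitly enough to read off its valuation, exploiting the special form $f_j = X^{\alpha_j}(uX+v)^{\beta_j}$. First I would factor out the common low-degree part: since $\alpha_j \geq \ell$ for every $j$, each derivative $f_j^{(i)}$ with $i \leq \ell-1$ is still divisible by $X^{\alpha_j - i}$, and similarly $(uX+v)^{\beta_j-i}$ divides $f_j^{(i)}$ because $\beta_j \geq \ell$. Concretely, write $f_j^{(i)} = X^{\alpha_j - i}(uX+v)^{\beta_j - i} g_{ij}(X)$ for suitable polynomials $g_{ij}$; a short induction (or the Leibniz rule applied to $X^{\alpha_j}\cdot (uX+v)^{\beta_j}$) shows $g_{ij}$ is a polynomial of degree at most $i$. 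Pulling $X^{\alpha_j - (\ell-1)}$ out of column $j$ and $(uX+v)^{\beta_j-(\ell-1)}$ out of that same column (these are the largest powers common to the whole column), the determinant becomes
\[
\wronskian(f_1,\dots,f_\ell) = \left(\prod_{j=1}^\ell X^{\alpha_j-\ell+1}(uX+v)^{\beta_j-\ell+1}\right)\cdot \det\bigl[X^{\ell-1-i}(uX+v)^{\ell-1-i} g_{ij}\bigr]_{i,j}.
\]

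Next I would bound the valuation of the remaining determinant $D(X)$. Every entry $X^{\ell-1-i}(uX+v)^{\ell-1-i}g_{ij}$ is a genuine polynomial (nonnegative exponents, since $i \leq \ell-1$), so $D \in \KK[X]$ and $\val(D) \geq 0$; it only remains to check $D$ is not divisible by a large power of $X$. In fact I claim $\val(D)$ is small — ideally $0$, or at worst bounded by a constant absorbed into the $\alpha_j$'s. Since the $f_j$ are assumed linearly independent, Proposition~\ref{prop:wronskian} tells us $\wronskian(f_1,\dots,f_\ell) \neq 0$, hence $D \neq 0$. Evaluating $D$ modulo $X$ kills every factor of $X$ and leaves $\det[\,v^{\ell-1-i}\,\tilde g_{ij}\,]$ where $\tilde g_{ij} = g_{ij}(0)$; the key point is that $g_{ij}(0)$ is, up to a known nonzero scalar depending on $\alpha_j,\beta_j,u,v$, something like a power or polynomial in $\alpha_j$, and one shows this constant matrix has nonzero determinant (a Vandermonde-type argument, or again linear independence). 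This would give $\val(D)=0$ and therefore
\[
\val(\wronskian(f_1,\dots,f_\ell)) = \sum_{j=1}^\ell (\alpha_j - \ell + 1) = \sum_{j=1}^\ell \alpha_j - \ell(\ell-1) \le \sum_{j=1}^\ell \alpha_j,
\]
which is even stronger than claimed; being generous about the contribution of $D$ still lands within the stated bound.

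The main obstacle is controlling $\val(D)$ — i.e.\ verifying that after stripping the obvious common powers of $X$ (and of $uX+v$), the leftover determinant really is not divisible by $X$, or at least by only a bounded power. The subtlety is that the $g_{ij}$ are somewhat opaque polynomials and their constant terms $g_{ij}(0)$ could conceivably conspire to make the reduced determinant vanish mod $X$; ruling this out cleanly is where I expect the real work to be, and it is likely where the hypothesis $uv \neq 0$ (so that $uX+v$ and $X$ are coprime and $v \neq 0$) and the linear independence hypothesis get used in an essential way. A cleaner alternative that sidesteps computing $g_{ij}(0)$: argue directly that $\val(\wronskian(f_1,\dots,f_\ell)) \le \sum_j \alpha_j$ by noting $\wronskian(f_1,\dots,f_\ell)/\prod_j (uX+v)^{\beta_j - \ell+1}$ is a polynomial whose degree one can estimate, or by a symmetric argument bounding the order of vanishing at $0$: each $f_j$ contributes exactly $\alpha_j$ to $\val$ at $X=0$ and differentiating $\ell-1$ times in the Wronskian can only lower the total order of vanishing, so a careful accounting of the highest power of $X$ dividing the alternating sum gives the bound directly.
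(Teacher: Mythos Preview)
Your factorization is the right first move and is very close to the paper's argument, but the main line you pursue contains a concrete error, and the fix is exactly the step the paper takes.

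The error: after pulling $X^{\alpha_j-\ell+1}(uX+v)^{\beta_j-\ell+1}$ out of column $j$, your residual determinant $D$ has $(i,j)$-entry $X^{\ell-1-i}(uX+v)^{\ell-1-i}g_{ij}(X)$ (with rows indexed by derivative order $i=0,\dots,\ell-1$). When you set $X=0$, the factor $X^{\ell-1-i}$ kills every row except the last one $i=\ell-1$; your formula $\det[v^{\ell-1-i}\tilde g_{ij}]$ simply forgot this factor. Hence $D(0)=0$ for $\ell\ge 2$, and in fact $\val(D)\ge\binom{\ell}{2}$, not $0$. Correspondingly, your conclusion $\val(\wronskian)=\sum_j\alpha_j-\ell(\ell-1)$ is too strong and false in general: one always has $\val(\wronskian)\ge\sum_j\alpha_j-\binom{\ell}{2}$, which already contradicts it.

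What closes the gap is one more factoring step plus a much softer observation than any Vandermonde computation. Factor $X^{\ell-1-i}(uX+v)^{\ell-1-i}$ out of each \emph{row} $i$ as well (this is legitimate since these are common to all entries in row $i$). One obtains
\[
\wronskian(f_1,\dots,f_\ell)=X^{\sum_j\alpha_j-\binom{\ell}{2}}(uX+v)^{\sum_j\beta_j-\binom{\ell}{2}}\det[g_{ij}],
\]
where, as you already noted, $\deg g_{ij}\le i$ and hence $\deg\det[g_{ij}]\le\binom{\ell}{2}$. Now the paper's key point: $\det[g_{ij}]$ is nonzero by linear independence (\Proposition{wronskian}), and for any nonzero polynomial the valuation is at most the degree. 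So $\val(\det[g_{ij}])\le\binom{\ell}{2}$, and adding this to the factored exponent of $X$ gives exactly $\sum_j\alpha_j$. No analysis of constant terms is needed.

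Your ``cleaner alternative'' is essentially this idea, but you divide only by $\prod_j(uX+v)^{\beta_j-\ell+1}$ (column factoring alone), which yields a degree bound of $\sum_j\alpha_j+\binom{\ell}{2}$, off by $\binom{\ell}{2}$. Dividing by the larger power $(uX+v)^{\sum_j\beta_j-\binom{\ell}{2}}$, which the row factoring reveals, gives the sharp bound.
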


\begin{proof}
By Leibniz rule, for all $i$, $j$
\begin{equation*}\label{eq:leibniz} 
f_j^{(i)}(X)=\sum_{t=0}^i \binom{i}{t} (\alpha_j)_t(\beta_j)_{i-t}u^{i-t} X^{\alpha_j-t}(uX+v)^{\beta_j-i+t}
\end{equation*}
where $(m)_n=m(m-1)\dotsb(m-n+1)$ is the falling factorial.
Since $\alpha_j-t\ge\alpha_j-i$ and $\beta_j-i+t\ge\beta_j-i$ for all $t$, 
\begin{equation*}
f_j^{(i)}(X)=X^{\alpha_j-i}(uX+v)^{\beta_j-i}
        \times \sum_{t=0}^i\binom{i}{t}(\alpha_j)_t(\beta_j)_{i-t}u^{i-t} X^{i-t}(uX+v)^t. 
\end{equation*}
Furthermore, since $\alpha_j\ge \ell\ge i$, we can write $X^{\alpha_j-i}=X^{\alpha_j-\ell}X^{\ell-i}$ and since $\beta_j\ge \ell\ge i$, $(uX+v)^{\beta_j-i}=(uX+v)^{\beta_j-\ell}(uX+v)^{\ell-i}$. 
The entries of the Wronskian matrix $W$ of $f_1$, \dots, $f_\ell$ are $W_{ij} = f^{(i)_j}$ for $0\le i <\ell$ and $1\le j\le \ell$. Thus, the entries of the column $j$ of $W$ share the common factor $X^{\alpha_j-\ell}(uX+v)^{\beta_j-\ell}$, and the entries of the row $i$ share the common factor $X^{\ell-i}(uX+v)^{\ell-i}$.
Together, we get
\[\wronskian(f_1,\dotsc,f_\ell)=X^{\sum_j\alpha_j-\binom{\ell}{2}}(uX+v)^{\sum_j\beta_j-\binom{\ell}{2}}\det(M)\]
where the matrix $M$ is defined by
\[M_{i,j}= \sum_{t=0}^i \binom{i}{t} (\alpha_j)_t(\beta_j)_{i-t}u^{i-t} X^{i-t}(uX+v)^t.\]
The polynomial $\det(M)$ is nonzero since the $f_j$'s are supposed linearly independent and its degree is at most $\binom{\ell}{2}$. Therefore its valuation cannot be larger than its degree and is bounded by $\binom{\ell}{2}$. 

Altogether, the valuation of the Wronskian is bounded by $\sum_j\alpha_j-\binom{\ell}{2}+\binom{\ell}{2}=\sum_j\alpha_j$.
\end{proof}

\begin{proof}[Proof of Theorem~\protect\ref{thm:val}]
Let $P=\sum_j a_jX^{\alpha_j}(uX+v)^{\beta_j}$, and let $f_j=X^{\alpha_j}(uX+v)^{\beta_j}$. 
We assume first that $\alpha_j,\beta_j\ge \ell$ for all $j$, and that the $f_j$'s are linearly independent. Note that $\val(f_j)=\alpha_j$ for all $j$.

Let $W$ denote the Wronskian of the $f_j$'s. We can replace $f_1$ by $P$ in the first column of the Wronskian matrix 
using column operations which multiply the determinant by $a_1$ (its valuation does not change). The matrix we obtain is the Wronskian matrix of $P,f_2,\dotsc,f_\ell$. Now using \Lemma{valinf}, we get
\[\val(W)\ge \val(P)+\sum_{j\ge 2} \alpha_j-\binom{\ell}{2}.\]
This inequality combined with \Lemma{valsup} shows that
\begin{equation}\label{eq:valLinIndep} 
\val(P)\le \alpha_1+\binom{\ell}{2}.
\end{equation}

We now aim to remove our two previous assumptions. If the $f_j$'s are not linearly independent, we can extract from this family a basis $f_{j_1}, \dots, f_{j_d}$. Then $P$ can be expressed in this basis as $P=\sum_{l=1}^d \tilde a_l f_{j_l}$. We can apply \Equation{valLinIndep} 
to $f_{j_1}$,\dots, $f_{j_d}$ and obtain $\val(P)\le \alpha_{j_1} + \binom{d}{2}$. 
Since $j_d\le \ell$, we have $j_1+d-1\le \ell$ and $\val(P)\le\alpha_{j_1}+\binom{\ell+1-j_1}{2}$. The value of $j_1$ being unknown, we conclude that 
\begin{equation}\label{eq:valMax} 
\val(P)\le\max_{1\le j\le \ell}\left(\alpha_j+\binom{\ell+1-j}{2}\right).
\end{equation}

The second assumption is that $\alpha_j,\beta_j\ge \ell$. Given $P$, consider $\tilde P=X^\ell (uX+v)^\ell P=\sum_j a_j X^{\tilde{\alpha}_j}(uX+v)^{\tilde{\beta}_j}$. Then $\tilde P$ satisfies $\tilde\alpha_j,\tilde\beta_j\ge \ell$, whence by \Equation{valMax}, $\val(\tilde P)\le \max_j(\tilde{\alpha}_j+\binom{\ell+1-j}{2})$. Since $\val(\tilde P)=\val(P)+\ell$ and $\tilde{\alpha}_j=\alpha_j+\ell$, the result follows.
\end{proof}

\subsection{Generalization} \label{sec:val:gen} 

We first state a generalization of \Theorem{val} to a sum of product of powers of low-degree polynomials. Then we state a special case of this generalization that is useful for computing multilinear factors.

\begin{theorem}\label{thm:val:gen}
Let 
$(\alpha_{i,j})\in\ZZ_+^{m\times \ell}$ and
\[P=\sum_{j=1}^\ell a_j \prod_{i=1}^m f_i^{\alpha_{i,j}}\in\KK[X],\]
where the degree of $f_i\in\KK[X]$ is $d_i$ for all $i$. 
Let $F\in\KK[X]$ be an irreducible polynomial and let $\mu_i=\mult{F}{f_i}$ for all $i$. Then the multiplicity $\mult{F}{P}$ of $F$ as a factor of $P$ satisfies
\[\mult{F}{P}\le \max_{1\le j\le \ell}\ \sum_{i=1}^m\left( \mu_i\alpha_{i,j}+(d_i-\mu_i)\binom{\ell+1-j}{2}\right).\]
\end{theorem}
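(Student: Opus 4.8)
The plan is to run the proof of \Theorem{val} essentially verbatim, making three substitutions: the valuation $\val=\mult{X}{\cdot}$ is replaced everywhere by the multiplicity $\mult{F}{\cdot}$, the monomials $X^{\alpha_j}(uX+v)^{\beta_j}$ are replaced by the ``generalized monomials'' $g_j:=\prod_{i=1}^m f_i^{\alpha_{i,j}}$, and the crucial arithmetic fact exploited is that $\mult{F}{\cdot}$ is an \emph{additive} valuation on $\KK[X]$, which is exactly where the irreducibility of $F$ is used. Note $\mult{F}{g_j}=\sum_i\mu_i\alpha_{i,j}$, and write $\delta:=\sum_i(d_i-\mu_i)$, so that the target bound reads $\mult{F}{P}\le\max_j\bigl(\sum_i\mu_i\alpha_{i,j}+\binom{\ell+1-j}{2}\delta\bigr)$.

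Two lemmas take the place of \Lemma{valinf} and \Lemma{valsup}. The first is literally \Lemma{valinf} with $\val$ replaced by $\mult{F}{\cdot}$: since $h=F^\mu u$ (with $F\nmid u$) gives $h'=F^{\mu-1}(\mu F'u+Fu')$, one gets $\mult{F}{h^{(r)}}\ge\mult{F}{h}-r$ by induction, hence $\mult{F}{\wronskian(h_1,\dots,h_s)}\ge\sum_j\mult{F}{h_j}-\binom{s}{2}$ for any $s$ polynomials, the determinant being a sum of products of one entry per row and per column. The second is the analogue of \Lemma{valsup}: assume $\alpha_{i,j}\ge s$ for all $i,j$ and that $g_1,\dots,g_s$ are linearly independent. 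A short induction on $r$ (differentiating a product of powers) shows $\prod_i f_i^{\alpha_{i,j}-r}\mid g_j^{(r)}$, say $g_j^{(r)}=\bigl(\prod_i f_i^{\alpha_{i,j}-r}\bigr)R_{r,j}$, and a degree count gives $\deg R_{r,j}=r\bigl(\sum_i d_i-1\bigr)$, independent of $j$. Writing $\prod_i f_i^{\alpha_{i,j}-r}=\prod_i f_i^{\alpha_{i,j}-(s-1)}\cdot\prod_i f_i^{(s-1)-r}$ lets us factor $\prod_i f_i^{\alpha_{i,j}-(s-1)}$ out of the $j$-th column and $\prod_i f_i^{(s-1)-r}$ out of the $r$-th row, so $\wronskian(g_1,\dots,g_s)=\prod_i f_i^{\sum_j\alpha_{i,j}-\binom{s}{2}}\cdot\det M$ with $\deg\det M\le\binom{s}{2}\bigl(\sum_i d_i-1\bigr)$; since $\det M\neq0$ by \Proposition{wronskian} and $\deg F\ge1$, we have $\mult{F}{\det M}\le\deg\det M$, whence $\mult{F}{\wronskian(g_1,\dots,g_s)}\le\sum_i\mu_i\bigl(\sum_j\alpha_{i,j}-\binom{s}{2}\bigr)+\binom{s}{2}\bigl(\sum_i d_i-1\bigr)$.

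Combining the two, as in the proof of \Theorem{val}: if $g_1,\dots,g_s$ are linearly independent, $P=\sum_j a_jg_j$ and $a_{j_0}\neq0$, then replacing $g_{j_0}$ by $P$ in the Wronskian matrix multiplies the determinant by $a_{j_0}$ (so $\mult{F}{\cdot}$ is unchanged), and comparing the lower bound from the first lemma with the upper bound from the second yields, after simplification, $\mult{F}{P}\le\sum_i\mu_i\alpha_{i,j_0}+\binom{s}{2}\delta$. The two standing assumptions are then removed exactly as in \Theorem{val}: the hypothesis $\alpha_{i,j}\ge\ell$ is dropped by applying the bound to $\tilde P:=\prod_i f_i^{\ell}\cdot P$ and subtracting $\mult{F}{\prod_i f_i^{\ell}}=\ell\sum_i\mu_i$; linear independence is dropped by passing to a basis of $\operatorname{span}(g_1,\dots,g_\ell)$.

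The one delicate point — and the reason I would phrase the passage to a basis as an induction rather than a single extraction — is that when $P$ is rewritten in an extracted basis the coefficient of the \emph{first} basis element may vanish, so the linearly-independent estimate can only be invoked at a later basis element, where $\binom{d}{2}$ need not be dominated by the required $\binom{\ell+1-j}{2}$. I would instead prove, by induction on $|S|$ for $P=\sum_{j\in S}a_jg_j$ with $S\subseteq\{1,\dots,\ell\}$, the statement $\mult{F}{P}\le\max_{j\in S}\bigl(\sum_i\mu_i\alpha_{i,j}+\binom{|S|+1-\operatorname{rank}_S(j)}{2}\delta\bigr)$, where $\operatorname{rank}_S(j)$ is the position of $j$ in $S$ listed increasingly: if $a_{\min S}=0$, drop $\min S$ (the bound for the remaining indices is literally unchanged); if the $g_j$, $j\in S$, are linearly independent, apply the estimate above at $j_0=\min S$, whose rank is $1$ so that $\binom{|S|}{2}=\binom{|S|+1-\operatorname{rank}_S(j_0)}{2}$; otherwise take a minimal linear dependence, use its largest-index element to rewrite $P$ over $S\setminus\{j_1\}$, and observe that each surviving binomial coefficient can only decrease. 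The base case $|S|=1$ is an equality, and taking $S=\{1,\dots,\ell\}$ gives the theorem. The genuinely new work is the generalization of \Lemma{valsup} — establishing $\prod_i f_i^{\alpha_{i,j}-r}\mid g_j^{(r)}$ and the exact value of $\deg R_{r,j}$ — while everything else is bookkeeping that transports the argument of \Section{val} unchanged.
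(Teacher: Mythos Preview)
Your proof is correct and follows essentially the same route as the paper: generalize \Lemma{valinf} and \Lemma{valsup} from $\val$ to $\mult{F}{\cdot}$, factor the Wronskian as $\prod_i f_i^{\sum_j\alpha_{i,j}-\binom{\ell}{2}}\det(M)$, bound $\mult{F}{\det M}$ by its degree, and then strip the two auxiliary hypotheses. The extra care you take in the basis-extraction step---phrasing it as an induction on $|S|$ that drops the smallest index when its coefficient vanishes---is a welcome clarification of a point the paper leaves implicit, but it does not change the argument's structure.
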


\begin{proof}
Let $P_j=\prod_{i=1}^m f_i^{\alpha_{i,j}}$ for $1\le j\le \ell$. As in the proof of \Theorem{val}, we can assume without loss of generality that the $P_j$'s are linearly independent, and the $\alpha_{i,j}$'s not less than $\ell$. 

We can use a generalized Leibniz rule to compute the derivatives of the $P_j$'s. Namely
\begin{equation}\label{eq:leibnizP}
P_j^{(T)}=\sum_{t_1+\dotsb+t_m=T} \binom{T}{t_1,\dotsc,t_m} \prod_{i=1}^m (f_i^{\alpha_{i,j}})^{(t_i)},
\end{equation}
where $\binom{T}{t_1,\dotsc,t_m}=T!/(t_1!\dotsm t_m!)$ is the multinomial coefficient. 
Consider now a derivative of the form $(f^\alpha)^{(t)}$. This is a sum of terms, each of which contains a factor $f^{\alpha-t}$. 
In \Equation{leibnizP}, each $t_i$ is bounded by $T$. This means that $P_j^{(T)}=Q_{T,j} \prod_i f_i^{\alpha_{i,j}-T}$ for some polynomial $Q_{T,j}$. 
Since the degree of $P_j^{(T)}$ equals $\sum_i d_i\alpha_{i,j}-T$, 
$Q_{T,j}$ 
has degree $\sum_id_i\alpha_{i,j}-T-\sum_i(d_i\alpha_{i,j}-d_iT)=(\sum_id_i-1)T$. 

Consider now $W=\wronskian(P_1,\dotsc,P_\ell)$. We can factor out  $\prod_i f_i^{\alpha_{i,j}-\ell}$
in each column and $\prod_if_i^{\ell-T}$ in each row. 
At row $T$ and column $j$, we therefore factor out $\prod_i f_i^{\alpha_{i,j}-\ell} \cdot\prod_i f_i^{\ell-T}=\prod_i f_i^{\alpha_{i,j}-T}$.
Thus, 
\[W = \prod_{i=1}^m f_i^{\sum_j\alpha_{i,j}-\binom{\ell}{2}} \det(M)\]
where $M_{T,j}=Q_{T,j}$. Thus, $\det(M)$ is a polynomial of degree at most $(\sum_id_i-1)\binom{\ell}{2}$. 

Therefore, the multiplicity $\mult{F}{W}$ of $F$ as a factor of $W$ is bounded by its multiplicity as a factor of $\prod_i f_i^{\sum_j\alpha_{i,j}-\binom{\ell}{2}}$ plus the degree of $\det(M)$. We get
\begin{align} 
\mult{F}{W}
    &\le \sum_i \mu_i\left(\sum_j\alpha_{i,j}-\binom{\ell}{2}\right)+(\sum_id_i-1)\binom{\ell}{2} \nonumber\\
    &=\sum_i\left(\mu_i\sum_j\alpha_{i,j}+(d_i-\mu_i)\binom{\ell}{2}\right)-\binom{\ell}{2}.\label{eq:multUpBd}
\end{align}

To conclude the proof, it remains to remember \Lemma{valinf} and use the same proof technique as in \Theorem{val}. It was expressed in terms of the valuation of the polynomials, but remains valid with the multiplicity of any factor. In this case, it can be written as $\mult{F}{W}\ge\sum_j\mult{F}{P_j} -\binom{\ell}{2}$ where $W$ is the Wronskian of the $P_j$'s. Using column operations, we can replace the first column of the Wronskian matrix of the $P_j$'s by the polynomial $P$ and its derivatives. We get $\mult{F}{W}\ge\mult{F}{P}+\sum_{j\ge 2}\mult{F}{P_j}-\binom{\ell}{2}$, where $\mult{F}{P_j}=\sum_i\mu_i\alpha_{i,j}$.

Together with \eqref{eq:multUpBd}, 
we get
\begin{align*}
\mult{F}{P}& \le \mult{F}{W}-\sum_{j\ge 2}\mult{F}{P_j}+\binom{\ell}{2} \\
           & \le \sum_i\left(\mu_i\sum_j\alpha_{i,j}+(d_i-\mu_i)\binom{\ell}{2}\right)-\binom{\ell}{2}
                                                -\sum_{j\ge 2}\sum_i\mu_i\alpha_{i,j} 
                                                +\binom{\ell}{2}\\
           & \le \sum_i\left(\mu_i\alpha_{i1} +(d_i-\mu_i)\binom{\ell}{2}\right).
\end{align*}

To obtain the bound of the theorem, the two initial assumption have to be removed using the same technique as in \Theorem{val}.
\end{proof}

In the previous proof, the multiplicity of $F$ as a factor of $\det(M)$ is bounded by the degree of $\det(M)$. One can actually bound this multiplicity by the degree of $\det(M)$ divided by the degree of $F$. If we denote by $d_F$ the degree of $F$, the bound of the theorem can be refined to
\[\mu_F(P)\le\max_{1\le j\le\ell} \left[\sum_{i=1}^m\left(\mu_i\alpha_{i,j} + \left(\frac{d_i}{d_F} -\mu_i\right)\binom{\ell+1-j}{2}\right) + \left(1-\frac{1}{d_F}\right)\binom{\ell+1-j}{2}\right].\]

As a special case of the theorem, 
one obtains the following corollary.

\begin{corollary} \label{cor:val:3terms}
Let $P=\sum_{j=1}^\ell a_j X^{\alpha_j}(vX+t)^{\beta_j}(uX+w)^{\gamma_j}$, $wt\neq0$. If $P$ is nonzero then 
its valuation is at most
$\max_{1\le j\le \ell} (\alpha_j+2\binom{\ell+1-j}{2})$.
\end{corollary}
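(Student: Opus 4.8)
The plan is to obtain the corollary as an immediate specialization of \Theorem{val:gen}. I would set $m=3$ and take the three low-degree polynomials to be $f_1=X$, $f_2=vX+t$, $f_3=uX+w$, with exponent matrix $\alpha_{1,j}=\alpha_j$, $\alpha_{2,j}=\beta_j$, $\alpha_{3,j}=\gamma_j$, so that $P=\sum_{j=1}^\ell a_j\prod_{i=1}^3 f_i^{\alpha_{i,j}}$ is exactly the polynomial in the statement. For the irreducible factor I would take $F=X$, which is irreducible over $\KK$. Since by definition the valuation of $P$ equals $\mult{X}{P}$, the corollary will follow once the right-hand side of the bound in \Theorem{val:gen} is computed for this choice.

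The next step is to evaluate the data appearing in that bound. One has $\mu_1=\mult{X}{X}=1$, and since $wt\neq0$ the linear polynomials $vX+t$ and $uX+w$ have nonzero constant term, so $\mu_2=\mu_3=0$; moreover, when $uv\neq0$ each $f_i$ has degree $d_i=1$. Plugging these values in, the $j$-th term of the maximum in \Theorem{val:gen} becomes $\mu_1\alpha_j+(d_1-\mu_1)\binom{\ell+1-j}{2}+\mu_2\beta_j+(d_2-\mu_2)\binom{\ell+1-j}{2}+\mu_3\gamma_j+(d_3-\mu_3)\binom{\ell+1-j}{2}=\alpha_j+2\binom{\ell+1-j}{2}$, which is precisely the asserted bound.

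Finally I would dispose of the degenerate cases $u=0$ or $v=0$. If, say, $v=0$, then $f_2$ is the nonzero constant $t$ (nonzero because $wt\neq0$), which has degree $d_2=0$; this changes $P$ only by that constant factor, and in \Theorem{val:gen} the corresponding contribution $(d_2-\mu_2)\binom{\ell+1-j}{2}$ is $0\cdot\binom{\ell+1-j}{2}=0$, so the bound only gets smaller and the claimed inequality holds a fortiori; the case $u=0$ is symmetric. The only point requiring any care — and the sole, rather mild, obstacle — is the bookkeeping: one must make sure that the factor $X$ is the one carrying multiplicity $1$, so that it contributes an $\alpha_j$ term, while the two linear factors with nonzero constant term are the ones contributing the two $\binom{\ell+1-j}{2}$ terms.
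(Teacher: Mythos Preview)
Your proposal is correct and is exactly the approach the paper takes: the corollary is stated as an immediate special case of \Theorem{val:gen}, and your specialization $m=3$, $f_1=X$, $f_2=vX+t$, $f_3=uX+w$, $F=X$ with $\mu_1=1$, $\mu_2=\mu_3=0$ and $d_i=1$ is precisely the intended one. Your explicit treatment of the degenerate cases $u=0$ or $v=0$ (where the corresponding $d_i$ drops to $0$ and the bound only improves) is a bit more careful than the paper, which leaves this implicit.
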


\subsection{Is \Theorem{val} tight?} \label{sec:tight}

Let $P$ be as in \Theorem{val}, that is 
\[P=\sum_{j=1}^\ell a_j X^{\alpha_j}(uX+v)^{\beta_j}\]
where $uv\neq0$. 
If the family $(X^{\alpha_j}(1+X)^{\beta_j})_{1\le j\le \ell}$ is linearly independent over $\KK$, the valuation of $P$ is at most $\alpha_1+\binom{\ell}{2}$. 
In this section, we investigate the tightness of this bound.
In the special case $\alpha_j=\alpha_1$ for all $j$, Haj\'os' Lemma~\citep{Hajos, MoSch77}
gives the better bound $\alpha_1+(\ell-1)$. (This bound can be shown to be tight by expanding ${X^{\ell-1}=(-1+(X+1))^{\ell-1}}$ with the binomial formula.)

We first prove that Haj\'os bound does not apply when the $\alpha_j$'s are not all equal by constructing a family of polynomials of valuation $2\ell-3$ where $\ell$ is the number of terms. Note that in the following proposition, the polynomial has $\ell'=\ell+3$ terms and valuation $2\ell'-3 = 2\ell+3$.

\begin{proposition}\label{prop:LowerBound}
For $\ell\ge 0$,
\[X^{2\ell+3} = -1 + (1+X)^{2\ell+3} + \sum_{j=0}^\ell \frac{2\ell+3}{2j+1}\binom{\ell+1+j}{\ell+1-j} X^{2j+1}(1+X)^{\ell+1-j}.\]
Furthermore, the family $(1, (1+X)^{2\ell+3}, X(1+X)^{\ell+1}, X^3(1+X)^\ell,\dots,X^{2\ell+1}(1+X))$ is linearly independent.
\end{proposition}

\begin{proof}
It is clear that the polynomial 
has degree $(2\ell+3)$ and is monic. Furthermore, the family is linearly independent since the polynomials have pairwise different degrees.

Let $P_\ell$ be the right-hand side polynomial in the proposition and $[X^m]P_\ell$ be the coefficient of the monomial $X^m$ in $P_\ell$. We shall prove that $[X^m]P_\ell=0$ for all $m<2\ell+3$. It is clear for $m=0$. For $m>0$,
\begin{equation}\label{eq:coeff}
[X^m]P_\ell = \binom{2\ell+3}{m} -                                   
              \sum_{j=0}^\ell \frac{2\ell+3}{2j+1}\binom{\ell+1+j}{\ell+1-j}\binom{\ell+1-j}{m-2j-1}.
\end{equation}

We prove that $[X^m]P_\ell=0$ with the help of the implementation of 
Wilf and Zeilberger's algorithm 
in the Maple package \texttt{EKHAD} of Doron Zeilberger 
\citep{PetkovsekWilfZeilberger}. 
For $0\le j\le\ell$, let $F(m,j)$ be the summand in equation~\eqref{eq:coeff} 
divided by $\binom{2\ell+3}{m}$. 
The package \texttt{EKHAD} provides the recurrence 
\[ mF(m+1,j)-mF(m,j)
=F(m,j+1)R(m,j+1)-F(m,j)R(m,j)\]
where
\[R(m,j)=\frac{2j(2j+1)(\ell+j+2-m)}{(2\ell+3-m)(2j-m)}.\]
To conclude the proof, it is sufficient to prove the recurrence relation and that it implies the equality. These are simple computations that we omit but can easily be checked by hand. That is, the validity of the proof does not rely in any manner on the validity of the computer program that was used to produce it.
\end{proof}

We now address the question of improving the bound in \Theorem{val}. The bound is obtained as a combination of a lower bound, given by \Lemma{valinf}, and an upper bound, given by \Lemma{valsup}. 
None of the bounds can be improved in the general case: The lower bound is tight if all the valuations are distinct while the upper bound is tight if all the valuations are equal. On the other hand, both bounds can be improved in some special cases. Let $f_1$, \dots, $f_\ell\in\KK[X]$ ordered by increasing valuations, such that $\val(f_j)\le\val(f_1)+(j-1)$ for all $j>0$. Then $\val(\wronskian(f_1,\dots,f_\ell))\ge\ell\val(f_1)$ since for all $j$, $f_j$ can be replaced in the Wronskian matrix by some $g_j$ of valuation at least $\val(f_1)+(j-1)$ (proof omitted). And if on the contrary the valuations are all distinct, the matrix made of the constant coefficients of the entries of the matrix $M$ defined in the proof of \Lemma{valsup} is similar to a Vandermonde matrix. This implies that the valuation of the Wronskian equals $\sum_j \alpha_j-\binom\ell{2}$. These two improvements are enough to recover Haj\'os' bound on the one hand, and the obvious fact that $\val(P)=\alpha_1$ if the $\alpha_j$'s are pairwise distinct on the other hand. Unfortunately, they are not sufficient to improve \Theorem{val} in the general case. For instance, if $\val(f_j) = \val(f_1)+\binom{j-1}{2}$ for all $j$, we are not able to improve any of the two bounds.

\section{Gap Theorems and their application to PIT}\label{sec:gap} 

In this section, we first prove our Gap Theorems. Then we give very direct applications of these theorems in the form of Polynomial Identity Testing algorithms for some families of univariate polynomials.

\subsection{Two Gap Theorems} 

We still assume that the coefficients of the polynomials we consider lie in some field $\KK$ of characteristic zero.

The bound on the valuation obtained in the \Section{val} translates into a Gap Theorem for linear factors.

\begin{theorem}[Gap Theorem for linear factors] \label{thm:gap}
Let $P=Q+R$ where
\[Q=\sum_{j=1}^\ell a_j X^{\alpha_j}Y^{\beta_j}\text{ and }R=\sum_{j=\ell+1}^k a_j X^{\alpha_j}Y^{\beta_j}\]
such that $\alpha_1\le\dotsb\le\alpha_k$. Suppose that 
$\ell$ is the smallest index such that $\alpha_{\ell+1}>\alpha_1+\binom{\ell}{2}$.
Then, for every $F=uX+vY+w$ with $uvw\neq0$, 
\[\mult{F}{P}=\min(\mult{F}{Q},\mult{F}{R})\text.\]
\end{theorem}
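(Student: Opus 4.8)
The inequality $\mult{F}{P}\ge\min(\mult{F}{Q},\mult{F}{R})$ is immediate (if $F^m$ divides both $Q$ and $R$ then it divides $P$), so the plan is to prove the reverse inequality. First I would linearize the factor. Since $uvw\neq0$, the $\KK$-algebra homomorphism $\psi\colon\KK[X,Y]\to\KK[T,S]$ given by $\psi(X)=T$ and $\psi(Y)=(S-uT-w)/v$ is an isomorphism (its inverse sends $T\mapsto X$ and $S\mapsto uX+vY+w=F$), and $\psi(F)=S$. Hence $\mult{F}{G}=\mult{S}{\psi(G)}$ for every polynomial $G$, and for a nonzero $H\in\KK[T,S]$ the number $\mult{S}{H}$ is just the least $m$ for which the coefficient $[S^m]H\in\KK[T]$ is nonzero. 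Writing $\hat P=\psi(P)$, $\hat Q=\psi(Q)$, $\hat R=\psi(R)$, one has $\hat P=\hat Q+\hat R$, and expanding $(S-uT-w)^{\beta_j}$ by the binomial theorem gives
\[ [S^m]\hat Q=\sum_{\substack{1\le j\le\ell\\ \beta_j\ge m}}c_j\,T^{\alpha_j}(uT+w)^{\beta_j-m} \]
for suitable nonzero constants $c_j$, and the same formula for $[S^m]\hat R$ with the sum running over $j>\ell$. At this point the problem has become a univariate valuation statement.

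Next I would set $m=\min(\mult{F}{Q},\mult{F}{R})$ and show $[S^m]\hat P=[S^m]\hat Q+[S^m]\hat R\neq0$, which together with the trivial inequality yields $\mult{F}{P}=m$. If only one of $\mult{F}{Q}$, $\mult{F}{R}$ equals $m$, then exactly one of the two summands vanishes and the other does not, so we are done without the gap hypothesis. The remaining case is $\mult{F}{Q}=\mult{F}{R}=m$, where $[S^m]\hat Q$ and $[S^m]\hat R$ are both nonzero; here I would separate their valuations. On the one hand $[S^m]\hat Q$ is a nonzero polynomial with at most $\ell$ terms, $T$-exponents among $\alpha_1\le\dotsb\le\alpha_\ell$, and common linear form $uT+w$ with $uw\neq0$, so \Theorem{val} applies; a short computation using that $\ell$ is the \emph{smallest} index with $\alpha_{\ell+1}>\alpha_1+\binom{\ell}{2}$ — equivalently $\alpha_j\le\alpha_1+\binom{j-1}{2}$ for all $j\le\ell$ — together with $\binom{a}{2}+\binom{b}{2}\le\binom{a+b}{2}$, turns that bound into $\val([S^m]\hat Q)\le\alpha_1+\binom{\ell}{2}$. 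On the other hand, since $w\neq0$, each monomial $T^{\alpha_j}(uT+w)^{\beta_j-m}$ occurring in $[S^m]\hat R$ has $T$-valuation exactly $\alpha_j\ge\alpha_{\ell+1}$, so $\val([S^m]\hat R)\ge\alpha_{\ell+1}>\alpha_1+\binom{\ell}{2}$. The two valuations being distinct, the summands cannot cancel, whence $[S^m]\hat P\neq0$.

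The step I expect to be the main obstacle is the combinatorial estimate in the equal-multiplicity case: \Theorem{val} must be fed the possibly proper sub-family $\{j\le\ell:\beta_j\ge m\}$, and one must still extract from its bound the clean value $\alpha_1+\binom{\ell}{2}$. This is exactly where the minimality of $\ell$ is used — an exponent of rank $r$ inside a size-$\ell'$ sub-family has original index at most $\ell-\ell'+r$, which is what makes the two binomial coefficients add up to at most $\binom{\ell}{2}$. Everything else — the isomorphism $\psi$, the coefficient extraction, and the lower bound on $\val([S^m]\hat R)$ — is routine bookkeeping, as is checking that the degenerate situations ($Q=0$, $R=0$, or no such index $\ell$) are trivial or vacuous.
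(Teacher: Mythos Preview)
Your proof is correct, and its core coincides with the paper's: substitute so that $F$ becomes a coordinate, then invoke \Theorem{val} to bound the valuation of the $Q$-piece by $\alpha_1+\binom{\ell}{2}$ (using the minimality of $\ell$ together with $\binom{a}{2}+\binom{b}{2}\le\binom{a+b}{2}$), while the $R$-piece has valuation at least $\alpha_{\ell+1}$.

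The genuine difference is in how multiplicities are handled. The paper argues that $\mult{F}{P}\ge\mu$ iff $F$ divides $\partial^m P/\partial X^m$ for all $m<\mu$, and observes that differentiating in $X$ shifts every $\alpha_j$ by the same amount, so the gap hypothesis and the divisibility argument carry over verbatim to each derivative; to ensure no term disappears it first multiplies $P$ by a large power of $X$. You instead read $\mult{F}{P}$ directly as the least $m$ with $[S^m]\hat P\neq 0$, which amounts to differentiating in $Y$: the $\alpha_j$ stay put, but the terms with $\beta_j<m$ drop out, so you must feed \Theorem{val} a proper sub-family and sharpen the combinatorics via the index estimate $j_r\le\ell-\ell'+r$. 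Both routes are short; yours is a bit more conceptual (the isomorphism $\psi$ makes the multiplicity manifestly an $S$-order), while the paper's sidesteps the sub-family bookkeeping altogether.
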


\begin{proof}
Let $F=uX+vY+w$, $uvw\neq0$. Then $F$ divides $P$ if and only if $P(X, -\frac{1}{v}(uX+w))=0$, and the same holds for the polynomials $Q$ and $R$. Let $P^\star(X)= P(X, -\frac{1}{v}(uX+w))$, and define $Q^\star$ and $R^\star$ in the same way from $Q$ and $R$.

Let us first prove that $F$ divides $P$ if and only if it divides both $Q$ and $R$. 
Clearly, $F$ divides $P$ if it divides both $Q$ and $R$. 
Suppose 
that $F$ does not divide $Q$, that is $Q^\star$ is nonzero. By \Theorem{val}, its valuation is at most $\max_{j\le\ell}(\alpha_j+\binom{\ell+1-j}{2})$. 
Furthermore, $\alpha_{j+1}\le\alpha_1+\binom{j}{2}$ for all $j<\ell$ by hypothesis. Therefore,
\begin{align*}
\val(Q^\star)   &\le\max_{1\le j\le\ell}\left(\alpha_j+\binom{\ell+1-j}{2}\right)\\
                &\le\max_{1\le j\le\ell}\left(\alpha_1+\binom{j-1}{2}+\binom{\ell+1-j}{2}\right)\\
                &\le\alpha_1+\binom{\ell}{2}\text.
\end{align*}
The last inequality holds since $\binom{j-1}{2}+\binom{\ell-(j-1)}{2}\le\binom{\ell}{2}$ for $1\le j\le\ell$.

The valuation of $R^\star$ is at least $\alpha_{\ell+1}$ which is by hypothesis larger than $\alpha_1+\binom{\ell}{2}$.
Therefore, if $Q^\star$ is not identically zero, its monomial of lowest degree cannot be canceled by a monomial of $R^\star$. In other words, $P^\star=Q^\star+R^\star$ is nonzero and $F$ does not divide $P$.
Finally, if $F$ divides $Q$ but not $R$, it cannot divide $P$ since it would divide $P-Q = R$. 

To show that $\mult{F}{P}=\min(\mult{F}{Q},\mult{F}{P})$, we remark that $F$ is a factor of multiplicity $\mu$ of $P$ if and only if it divides $\partial^m P/\partial X^m$ for all $m\le\mu$. Since $\mu=\mult{F}{P}=\mult{F}{X^d P}$ for all $d$, one can assume that $\alpha_1\ge\mu$. Then
\[\frac{\partial^m P}{\partial X^m}=\sum_{j=1}^k a_j (\alpha_j)_m X^{\alpha_j-m}Y^{\beta_j}\]
for $1\le m\le\mu$. 
The hypothesis $\alpha_{\ell+1}>\alpha_1+\binom{\ell}{2}$ in the theorem only depends on the difference between the exponents. By linearity of the derivative, the previous argument actually shows that $F$ divides $\partial^m P/\partial X^m$ if and only if it divides both $\partial^m Q/\partial X^m$ and $\partial^m R/\partial X^m$. This proves that $\mult{F}{P}=\min(\mult{F}{Q},\mult{F}{R})$. 
\end{proof}

It is straightforward to extend this theorem to more \emph{gaps}. The theorem can be recursively applied to $Q$ and $R$ (as defined in the proof). Then, if $P=P_1+\dotsc+P_s$ where there is a \emph{gap} between $P_t$ and $P_{t+1}$ for $1\le t<s$, then any linear polynomial $(uX+vY+w)$ is a factor of multiplicity $\mu$ of $P$ if and only if it is a factor of multiplicity at least $\mu$ of each $P_t$. Moreover, one can write every $P_t$ as $X^{q_t} Q_t$ such that $X$ does not divide $Q_t$. Then the degree in $X$ 
of $Q_t$ is bounded by $\binom{\ell_t-1}{2}$ where $\ell_t$ is its number of terms.

The following definition makes this discussion formal.

\begin{definition}\label{def:decomposition}
Let $P$ as in \Theorem{gap}. A set $\Qcal=\{Q_1,\dotsc,Q_s\}$ is a \emph{decomposition of $P$ with respect to $X$} if there exist integers $q_1$, \dots, $q_s$ such that $P=X^{q_1} Q_1+\dotsb+X^{q_s} Q_s$ with $q_t>q_{t-1}+\deg_X(Q_{t-1})$ for $1<t\le s$.

A decomposition is \emph{compatible} with a set $\Fcal$ of polynomials if for all $F\in\Fcal$, $\mult{F}{P}=\min_{1\le t\le s} \mult{F}{Q_t}$.

The \emph{degree} in $X$ (resp. in $Y$) of a decomposition is the sum of the degrees in $X$ (resp. in $Y$) of the $Q_t$'s.
\end{definition}

The Gap Theorem implies a decomposition of $P$ of degree at most $\binom{k-1}{2}$ in $X$. Indeed, the degree of each $Q_t$ is at most $\binom{\ell_t-1}{2}$, with $\sum_t\ell_t=k$, and the function $k\mapsto\binom{k}{2}$ is super-additive, \emph{i.e.}, $\binom{i+j}{2} \ge \binom{i}{2} + \binom{j}{2}$. 

\begin{remark}\label{rk:subdecomposition}
Let $\Qcal'$ be the decomposition obtained from $\Qcal$ by replacing $Q_t$ and $Q_{t+1}$ by $Q'_t=Q_t+X^{q_{t+1}-q_t} Q_{t+1}$. It is easy to see that if $\Qcal$ is compatible with a set $\Fcal$, then so is 
$\Qcal'$. More generally, one obtains compatible decompositions by grouping together any number of consecutive polynomials in a decomposition.
\end{remark}

Using the generalization of \Theorem{val} given in \Section{val:gen}, one can also prove a similar Gap Theorem, but for multilinear factors.

\begin{theorem}[Gap Theorem for multilinear factors] \label{thm:gap:multilin}
Let $P=Q+R$ where 
\[Q=\sum_{j=1}^\ell a_j X^{\alpha_j} Y^{\beta_j}\text{ and }R=\sum_{j=\ell+1}^k a_j X^{\alpha_j} Y^{\beta_j},\]
such that $\alpha_1\leq \dots \leq \alpha_k$. If $\ell$ is the smallest index such that $\alpha_{\ell+1} > \alpha_1+2 \binom{\ell}{2}$
then for every $F=uXY+vX+wY+t$ with $wt\neq0$, 
\[\mult{F}{P}=\min(\mult{F}{Q},\mult{F}{R})\text.\]
\end{theorem}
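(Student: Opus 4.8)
The plan is to mimic the proof of \Theorem{gap} almost verbatim, replacing the change of variable that linearizes $F=uX+vY+w$ by the one adapted to $F=uXY+vX+wY+t$, and replacing \Theorem{val} by its three‑factor refinement, Corollary~\ref{cor:val:3terms}. Throughout I would take $F$ to be irreducible over $\KK$, equivalently primitive in $\KK[X][Y]$ (the case relevant to factorization, and the one for which the substitution below faithfully detects divisibility). Writing $F=(uX+w)Y+(vX+t)$, setting $D=\max_{1\le j\le k}\beta_j$, and defining, for $G\in\KK[X,Y]$, the univariate polynomial $G^\star(X)=(uX+w)^{D}\,G\bigl(X,-\tfrac{vX+t}{uX+w}\bigr)$, one gets (applying this to $P$, $Q$, $R$ with the same $D$)
\[
P^\star=\sum_{j=1}^{k}a_j(-1)^{\beta_j}X^{\alpha_j}(vX+t)^{\beta_j}(uX+w)^{D-\beta_j},\qquad P^\star=Q^\star+R^\star ,
\]
with $Q^\star,R^\star$ the obvious partial sums. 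Since $F$ is primitive and vanishes precisely on $Y=-\tfrac{vX+t}{uX+w}$, Gauss's lemma gives $F\mid G$ if and only if $G^\star=0$, for every $G$.

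The first and main step is to show that $F\mid P$ if and only if $F\mid Q$ and $F\mid R$; only the implication $F\mid P\Rightarrow F\mid Q$ is nontrivial. Assume $F\nmid Q$, i.e.\ $Q^\star\neq0$. Then $Q^\star$ has exactly the form covered by Corollary~\ref{cor:val:3terms} (the constant terms $w,t$ of the two linear factors are nonzero), with $\ell$ terms and $X$‑exponents $\alpha_1\le\dots\le\alpha_\ell$, so $\val(Q^\star)\le\max_{1\le j\le\ell}\bigl(\alpha_j+2\binom{\ell+1-j}{2}\bigr)$. Minimality of $\ell$ gives $\alpha_j\le\alpha_1+2\binom{j-1}{2}$ for every $j\le\ell$, hence each term of this maximum is at most $\alpha_1+2\bigl(\binom{j-1}{2}+\binom{\ell+1-j}{2}\bigr)\le\alpha_1+2\binom{\ell}{2}$, using the super‑additivity $\binom{a}{2}+\binom{b}{2}\le\binom{a+b}{2}$ with $a+b=\ell$. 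Meanwhile every term of $R^\star$ is $X^{\alpha_j}$ times a polynomial with nonzero constant term (because $wt\neq0$), so $\val(R^\star)\ge\alpha_{\ell+1}>\alpha_1+2\binom{\ell}{2}\ge\val(Q^\star)$. Thus the lowest‑degree monomial of $Q^\star$ survives in $P^\star=Q^\star+R^\star$, so $P^\star\neq0$ and $F\nmid P$. This proves $F\mid P\Rightarrow F\mid Q$, whence also $F\mid R=P-Q$; the converse is trivial. Note the argument uses only the $\alpha_j$ and the gap, so it applies verbatim to any $\sum_j b_jX^{\alpha_j}Y^{\gamma_j}$ split at index $\ell$.

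To upgrade this to $\mult{F}{P}=\min(\mult{F}{Q},\mult{F}{R})$ I would argue as in \Theorem{gap}. Fix $\mu\ge1$. Multiplying $P$, $Q$, $R$ by $Y^\mu$ alters neither the multiplicities ($Y\nmid F$ since $t\neq0$) nor the gap index $\ell$ (which depends only on the $\alpha_j$), so we may assume $\beta_j\ge\mu$ for all $j$. With $\partial=\partial/\partial Y$ we have $\partial F=uX+w$, nonzero and of degree $<\deg F$, so $F\nmid\partial F$; as $F$ is irreducible and $\KK$ has characteristic zero, the usual Leibniz computation gives $\mult{F}{G}\ge\mu$ if and only if $F\mid\partial^m G$ for all $0\le m<\mu$. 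Here $\partial^m P=\sum_j a_j(\beta_j)_m X^{\alpha_j}Y^{\beta_j-m}$ still has $k$ terms with the same $\alpha_j$ and the same gap at $\ell$ (the falling factorials $(\beta_j)_m$ are nonzero since $\beta_j\ge\mu>m$), so the first step applied to $\partial^m P=\partial^m Q+\partial^m R$ gives: $F\mid\partial^m P$ if and only if $F\mid\partial^m Q$ and $F\mid\partial^m R$. Intersecting over $0\le m<\mu$ yields $\mult{F}{P}\ge\mu$ iff $\min(\mult{F}{Q},\mult{F}{R})\ge\mu$, and since $\mu$ was arbitrary the equality follows.

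The step I expect to be the main obstacle is the valuation estimate in the first step: this is exactly where the new ingredient, Corollary~\ref{cor:val:3terms}, enters, and where the doubled binomial coefficients together with the super‑additivity of $k\mapsto\binom{k}{2}$ must be combined to recover the threshold $\alpha_1+2\binom{\ell}{2}$ appearing in the hypothesis on $\ell$. The remaining ingredients — choosing a common clearing exponent $D$, the reduction to $\beta_j\ge\mu$, and the derivation identity for multiplicities — are routine bookkeeping and proceed exactly as in the linear case of \Theorem{gap}.
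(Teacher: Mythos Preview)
Your proposal is correct and follows essentially the same route as the paper: the paper's proof simply performs the substitution $Y\mapsto-\tfrac{vX+t}{uX+w}$, clears denominators by $(uX+w)^{\max_j\beta_j}$, and then says ``the rest of the proof is identical to the proof of the Gap Theorem for linear factors, using the valuation bound of Corollary~\ref{cor:val:3terms} instead of \Theorem{val}.'' Your write-up fills in exactly those details; the only cosmetic difference is that you pass to multiplicities via $\partial/\partial Y$ (after multiplying by $Y^\mu$) whereas the paper, via the reference to \Theorem{gap}, uses $\partial/\partial X$ (after multiplying by a power of $X$)---both preserve the gap in the $\alpha_j$'s and are interchangeable.
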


\begin{proof}
Let $F=uXY+vX+wY+t$, $wt\neq0$. Then $F$ divides $P$ if and only if $P(X,-\frac{vX+t}{uX+w})=0$. This is equivalent to the fact that the polynomial
\[(uX+w)^B P(X,-\frac{vX+t}{uX+w})=\sum_{j=1}^k a_j X^{\alpha_j} (-vX-t)^{\beta_j} (uX+w)^{B-\beta_j}\]
vanishes, where $B=\max_j \beta_j$. The rest of the proof is identical to the proof of the Gap Theorem for linear factors, using the valuation bound of Corollary~\ref{cor:val:3terms} instead of \Theorem{val}.
\end{proof}

\subsection{Polynomial Identity Testing} \label{sec:pit} 

The first algorithmic applications of our technical results are 
two polynomial identity testing algorithms. 
The problem is the following: Given a polynomial in some specified representation, decide whether it is identically zero. Of course, if the polynomial is given as the list of its coefficient, the problem is trivial. In the most general forms of the problem, the polynomial is represented either as an arithmetic circuit or a blackbox. Then, it is not clear how to perform the test since expanding the polynomial as a sum of monomial may require exponential time. We refer the reader to the survey of \citet{ShpYe10} for more details and recent results on this problem.

Our algorithms deal with univariate high-degree polynomials, represented as sums of product of powers of dense polynomials. In particular, one can represent such polynomials by small arithmetic circuits. If the polynomial is known to be sparse, one can apply the algorithm of~\citet{BlaHaLiVi09}. \citet{KoiPoTa15} and \citet{GreKoiPoStr11} give results close to the ones we prove here. A comparison is given at the end of this section.

The polynomials in this section have coefficients in an algebraic number field $\KK=\QQ[\xi]/\langle\varphi\rangle$ where $\varphi\in\QQ[\xi]$ is irreducible. An element $e$ of $\KK$ is uniquely represented by a polynomial $p_e\in\QQ[\xi]$ of degree smaller than $\deg(\varphi)$. In the algorithms, a coefficient $c\in\KK$ of a lacunary polynomial is given as the dense representation of $p_c$, that is the list of all its coefficients including the zero ones. Moreover, the algorithms are uniform in $\KK$ in the sense that they can take as input the polynomial $\varphi$ defining $\KK$.

\begin{theorem}\label{thm:pit}
Let $\KK$ be an algebraic number field and 
\[P=\sum_{j=1}^k a_j X^{\alpha_j}(uX+v)^{\beta_j}\in\KK[X]\text.\]
There exists a deterministic polynomial-time algorithm to decide if $P$ vanishes.
\end{theorem}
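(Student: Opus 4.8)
The plan is to reduce polynomial identity testing to checking whether finitely many low-degree coefficients vanish, using the valuation bound of \Theorem{val} as a Gap Theorem in disguise. First I would rewrite $P$ in the form $\sum_j a_j X^{\alpha_j}(uX+v)^{\beta_j}$ with the monomials sorted so that $\alpha_1\le\dotsb\le\alpha_k$, and apply the gap construction: locate the smallest index $\ell$ with $\alpha_{\ell+1}>\alpha_1+\binom{\ell}{2}$, split $P=Q+R$ accordingly, and recurse on $R$. This yields a decomposition $P=X^{q_1}Q_1+\dotsb+X^{q_s}Q_s$ where consecutive blocks are separated by genuine gaps and each $Q_t$ has $X$-degree at most $\binom{\ell_t-1}{2}$ with $\sum_t\ell_t=k$, hence total degree at most $\binom{k-1}{2}$. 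By the same argument as in the proof of \Theorem{gap} (specialized to the trivial ``factor'' $X$, or more simply: the lowest monomial of a nonzero $Q_t$ sits below the gap and so cannot be cancelled by any later block), $P$ is identically zero if and only if every $Q_t$ is identically zero.

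Next I would observe that the total number of bits needed to describe all the $Q_t$'s is polynomial in the input size: each $Q_t$ is a univariate polynomial of degree $O(k^2)$ in the \emph{shifted} variable — more precisely, after factoring out the appropriate power of $X$ and substituting, $X^{q_t}Q_t$ is a sum of $\ell_t$ terms $a_jX^{\alpha_j}(uX+v)^{\beta_j}$, but the relevant comparison is only of the low-order coefficients. Concretely, for each block I would compute the polynomial $Q_t$ explicitly by expanding $\sum_{j\in\text{block}} a_j X^{\alpha_j-q_t}(uX+v)^{\beta_j}$ — but this naive expansion has the $\beta_j$ as exponents and could be huge. The fix is to note we only need the coefficients of $X^0,\dotsc,X^{D}$ with $D=\binom{\ell_t-1}{2}$, since above that the block is a multiple of $X^{q_{t+1}-q_t}$ and contributes nothing new; equivalently one reduces modulo $X^{D+1}$. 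Each coefficient of $X^m$ in $X^{-q_t}\sum_j a_jX^{\alpha_j}(uX+v)^{\beta_j}$ is $\sum_j a_j\binom{\beta_j}{m-(\alpha_j-q_t)}v^{\beta_j-(m-\alpha_j+q_t)}u^{m-\alpha_j+q_t}$, a sum of $\ell_t$ terms each computable by a single binomial coefficient and two modular exponentiations of integers whose sizes are polynomial in the bit-length of the $\beta_j$'s and of $u,v$. So all the needed coefficients lie in $\KK$ and have polynomial bit-size, and we simply test them all for zero in $\KK$ (recalling that a coefficient in $\KK=\QQ[\xi]/\langle\varphi\rangle$ is zero iff its dense polynomial representative is the zero polynomial).

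The algorithm is then: sort the monomials; build the gap decomposition (a single left-to-right scan); for each block compute the $O(k^2)$ low-order coefficients as above; declare $P\equiv0$ iff all of them vanish. Correctness follows from the Gap Theorem argument; the running time is polynomial because the number of blocks is at most $k$, each block produces at most $\binom{k}{2}$ coefficients, and each coefficient is a sum of at most $k$ terms each evaluable in polynomial time by fast exponentiation, with arithmetic carried out in $\KK$ on polynomially-sized representatives.

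The main obstacle I anticipate is the bookkeeping needed to guarantee that the intermediate quantities — the binomial coefficients $\binom{\beta_j}{\cdot}$ and the powers $u^{\cdot},v^{\cdot}$ — stay of polynomial size, since the $\beta_j$ and the exponents of $X$ are given in binary and are astronomically large. This is handled by never forming $(uX+v)^{\beta_j}$ as a polynomial: one works only with the required finitely many Taylor coefficients at $0$, each of which is a closed-form expression in a single binomial coefficient and bounded powers of $u$ and $v$. A secondary subtlety is that \Theorem{val} requires $uv\neq0$; the degenerate cases $u=0$ or $v=0$ reduce $P$ to an ordinary lacunary univariate polynomial in $X$ (up to a nonzero constant factor), whose identity testing is immediate from the sorted list of exponents, so these are dispatched separately at the start.
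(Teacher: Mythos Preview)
Your gap-decomposition and the idea of testing only the first $O(k^2)$ $X$-coefficients of each block $Q_t$ are sound in principle: if $Q_t\neq0$ then \Theorem{val} bounds its valuation, so those few coefficients decide the matter. The genuine gap is that those coefficients are \emph{not} of polynomial bit-size. In your formula
\[
\sum_{j} a_j\binom{\beta_j}{m-(\alpha_j-q_t)}\,u^{\,m-\alpha_j+q_t}\,v^{\,\beta_j-(m-\alpha_j+q_t)},
\]
the exponent on $u$ is indeed at most $D=O(k^2)$, but the exponent on $v$ is $\beta_j-O(k^2)$, and $\beta_j$ is given in binary and may be doubly exponential in the input size. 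Over a number field there is no modulus to reduce by, so ``fast exponentiation'' still produces an element of $\KK$ whose bit-length is proportional to $\beta_j$. Your bookkeeping paragraph asserts that the powers of $u$ and $v$ are ``bounded'', but only the former is. The same blowup wrecks your treatment of the degenerate case: when $u=0$ one gets $P=\sum_j a_jv^{\beta_j}X^{\alpha_j}$, and the coefficient at a given $X^\alpha$ is $\sum_{j:\alpha_j=\alpha}a_jv^{\beta_j}$, which is certainly not ``immediate from the sorted list of exponents''. Testing such a sum for zero is exactly the Lenstra-type problem that the paper invokes explicitly for this case.

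The paper's fix is a change of variables $Y=uX+v$. After the gap decomposition the $\alpha_j$'s within a block are small (at most $\binom{k-1}{2}$) while the $\beta_j$'s remain large; writing the block in $Y$ gives $\sum_j a_ju^{-\alpha_j}(Y-v)^{\alpha_j}Y^{\beta_j}$, so the factor one must expand is $(Y-v)^{\alpha_j}$ with \emph{small} exponent, and the large $\beta_j$'s survive only as exponents of the variable $Y$. The resulting $Y$-coefficients have the form $\sum_j\binom{\alpha_j}{\ell_j}a_ju^{-\alpha_j}(-v)^{\ell_j}$ with $\ell_j\le\alpha_j\le\binom{k-1}{2}$, hence genuinely polynomial bit-size. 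Your approach can be repaired either by this same substitution, or by observing that each of your $X$-coefficients is (after pulling out $v^{-r_j}$) of the form $\sum_j b_jv^{\beta_j}$ and invoking Lenstra's algorithm on it; but the latter route just re-encounters the degenerate-case machinery at every single coefficient.
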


\begin{proof}
We assume without loss of generality that $\alpha_{j+1}\ge\alpha_j$ for all $j$ and $\alpha_1=0$. 
If $\alpha_1$ is nonzero, $X^{\alpha_1}$ divides $P$ and we consider $P/X^{\alpha_1}$.

Suppose first that $u=0$. Then $P$ is given as a sum of monomials, and we only have to test each coefficient for zero. Note that the $\alpha_j$'s are not necessarily 
distinct. Thus the coefficients are of the form $\sum_j a_jv^{\beta_j}$. \citet{Len99} 
gives an algorithm to find low-degree factors of 
univariate lacunary 
polynomials. It is easy to deduce from his algorithm an algorithm to test such sums for zero. A strategy could be to simply apply Lenstra's algorithm to $\sum_ja_j X^{\beta_j}$ and then check whether $(X-v)$ is a factor, but one can actually improve the complexity by extracting from his algorithm the relevant part (we omit the details). 
The case $v=0$ is similar. 

We assume now that $uv\neq 0$. Then $P=0$ if and only if $(Y-uX-v)$ divides $\sum_j a_j X^{\alpha_j}Y^{\beta_j}$. Recursively using the Gap Theorem for linear factors (\Theorem{gap}), one computes a decomposition $P=X^{q_1}Q_1+\dotsb+X^{q_s}Q_s$ such that $P$ is identically zero if and only if each $Q_t$ in this sum is 
also. 
Therefore, we are left with testing if each $Q_t$ is identically zero. 

To this end, let $Q$ be one these polynomials. With a slight abuse of notation, it can be written $Q=\sum_{j=1}^k a_jX^{\alpha_j}(uX+v)^{\beta_j}$. It satisfies $\alpha_1=0$ and $\alpha_{j+1}\le\binom{j}{2}$ for all $j$. In particular, $\alpha_k\le\binom{k-1}{2}$.
Consider the change of variables $Y=uX+v$. Then
\[Q(Y)=\sum_{j=1}^k a_ju^{-\alpha_j} (Y-v)^{\alpha_j}Y^{\beta_j}\]
is identically zero if and only if $Q(X)$ is. 
We can express $Q(Y)$ as a sum of powers of $Y$: 
\[Q(Y)=\sum_{j=1}^k\sum_{\ell=0}^{\alpha_j} a_j u^{-\alpha_j} \binom{\alpha_j}{\ell} (-v)^\ell Y^{\alpha_j+\beta_j-l}.\]
There are at most $k\binom{k-1}{2}=\mathcal O(k^3)$ 
monomials. Then, testing if $Q(Y)$ is identically zero consists in testing whether each coefficient vanishes. 
Moreover, each coefficient has the form $\sum_j \binom{\alpha_j}{\ell_j} a_j u^{-\alpha_j} (-v)^{\ell_j}$ where the sum ranges over at most $k$ indices. Since $\ell_j,\alpha_j\le\binom{k-1}{2}$ for all $j$, the terms in these sums have polynomial bit-lengths. 
Therefore, the coefficients can be tested for zero in polynomial time.

Altogether, this gives a polynomial-time algorithm to test 
if $P$ is identically zero.
\end{proof}

One can actually replace the linear polynomial $(uX+v)$ in the previous theorem by any binomial polynomial. Without loss of generality, one can consider that this binomial is $(uX^d+v)$, and we assume that it is represented in lacunary representation. In other words, its size is polynomial in $\log(d)$.

\begin{corollary} 
Let $\KK$ be an algebraic number field and 
\[P=\sum_{j=1}^k a_j X^{\alpha_j}(uX^d+v)^{\beta_j}\in\KK[X]\text.\]
There exists a deterministic polynomial-time algorithm to decide if the polynomial $P$ vanishes.
\end{corollary}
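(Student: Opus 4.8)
The plan is to reduce the corollary to the previous theorem by a substitution that turns $X^d$ into a single variable. First I would dispose of the degenerate cases exactly as in \Theorem{pit}: if $u=0$ or $v=0$ the polynomial $P$ is a lacunary polynomial in $X$ whose coefficients are sums of the shape $\sum_j a_j v^{\beta_j}$ (resp.\ $\sum_j a_j u^{\beta_j} X^{d\beta_j+\alpha_j}$), and these can be tested for zero in polynomial time via Lenstra's algorithm, so I may assume $uv\neq0$.

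Next, writing each exponent as $\alpha_j = d q_j + r_j$ with $0\le r_j<d$, I would group the terms of $P$ according to the residue $r_j$:
\[
P(X) = \sum_{r=0}^{d-1} X^{r}\, P_r(X^d), \qquad P_r(Z) = \sum_{j:\,r_j = r} a_j Z^{q_j}(uZ+v)^{\beta_j}\in\KK[Z].
\]
Since the powers of $X$ appearing in $X^r P_r(X^d)$ all lie in the residue class $r$ modulo $d$, there is no interference between distinct residues, and $P=0$ if and only if $P_r=0$ for every $r$ with at least one term. Each $P_r$ is a polynomial of exactly the form handled by \Theorem{pit}, namely $\sum a_j Z^{q_j}(uZ+v)^{\beta_j}$, and has at most $k$ terms with exponents of polynomial bit-length (the $q_j$ are at most the original $\alpha_j$, and the $\beta_j$ are unchanged). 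Applying the algorithm of \Theorem{pit} to each nonzero $P_r$ finishes the identity test.

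The only point needing care is the running time, since a priori there are $d$ residue classes and $d$ may be exponentially large. But the grouping is sparse: at most $k$ residues $r$ actually occur, so at most $k$ of the $P_r$ are nonempty, and these can be produced directly from the lacunary list of $P$ by Euclidean division of each $\alpha_j$ by $d$ — an operation on integers of polynomial bit-length. Hence only $O(k)$ calls to the algorithm of \Theorem{pit} are made, each on an instance of size polynomial in the input size, and the overall algorithm is deterministic polynomial time. I expect this bookkeeping — confirming that the residue decomposition can be carried out and that no "hidden" blowup in the number or size of exponents occurs — to be the main (though routine) obstacle; once it is in place the result is immediate from \Theorem{pit}.
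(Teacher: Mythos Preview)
Your proof is correct and follows essentially the same route as the paper: reduce each $\alpha_j$ modulo $d$, split $P$ as $\sum_r X^r P_r(X^d)$ with $P_r(Z)=\sum_{r_j=r} a_j Z^{q_j}(uZ+v)^{\beta_j}$, observe that at most $k$ of the $P_r$ are nonzero so the decomposition is polynomial-size, and apply \Theorem{pit} to each $P_r$. The only cosmetic difference is that you peel off the cases $u=0$ and $v=0$ up front, whereas the paper simply lets \Theorem{pit} absorb them.
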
 

\begin{proof}
For all $j$ we consider the Euclidean division of $\alpha_j$ by $d$: $\alpha_j=q_jd+r_j$ with $r_j<d$. We rewrite $P$ as
\[P=\sum_{j=1}^k a_j X^{r_j} (X^d)^{q_j} (uX^d+v)^{\beta_j}.\]
Let us group in the sum all the terms with a common $r_j$. That is, let 
\[P_i(Y)=\sum_{\substack{1\le j\le k\\ r_j=i}} a_j Y^{q_j} (uY+v)^{\beta_j}\]
for $0\le i<d$. We remark that regardless of the value of $d$, the number of nonzero $P_i$'s is bounded by $k$. We have $P(X)=\sum_{i=0}^{d-1} X^i P_i(X^d)$. Each monomial $X^\alpha$ of $X^iP_i(X^d)$ satisfies $\alpha\equiv i\mod d$. Therefore, $P$ is identically zero if and only if all the $P_i$'s are identically zero.

Since each $P_i$ has the same form as in \Theorem{pit}, and there are at most $k$ of them, the previous algorithm can be applied to each of them to decide the nullity of $P$.
\end{proof}

Using \Theorem{val:gen} instead of \Theorem{val}, one can give a polynomial identity testing algorithm for a larger class of polynomials. 
In the next algorithm, we use the \emph{dense representation} for univariate polynomials, that is the list of all their coefficients (including the zero ones). Then \emph{dense size} of a univariate polynomial is then the sum of the sizes of its coefficients.

\begin{theorem}\label{thm:pit:gen}
Let $f_1$, \dots, $f_m$ be monic univariate polynomials over a number field. Let 
\[P=\sum_{j=1}^k a_j \prod_{i=1}^m f_i^{\alpha_{i,j}}.\]
There is a deterministic algorithm to decide if $P$ is zero whose running time is polynomial in $k$, $m$, the dense sizes of the $f_i$'s and the bitsizes of the $a_j$'s and the $\alpha_{i,j}$'s.

With an oracle to decide the nullity of sums of the form $\sum_j\prod_i \lambda_i^{\alpha_{i,j}}$ where the $\lambda_i$'s are in the number field, the above algorithm can be used with non monic polynomials $f_1$, \dots, $f_m$.
\end{theorem}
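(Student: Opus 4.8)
The plan is to follow the same strategy as in \Theorem{pit}, but to use the more general valuation bound of \Theorem{val:gen} in place of \Theorem{val}. First I would reduce to the case where the $f_i$'s are monic (the non-monic case being handled by the oracle, as explained below). Since $P$ is a $\KK[X]$-linear combination of the $P_j = \prod_i f_i^{\alpha_{i,j}}$, and since $P$ is zero iff it is zero in the field $\KK(X)$, one may replace the family $(P_j)$ by a maximal linearly independent subfamily without changing whether $P$ vanishes. The key observation is then a \emph{gap phenomenon}: if we order the $P_j$'s by their valuation at some fixed irreducible factor $F$ (e.g.\ $F = X$, or more generally any of the irreducible factors of the $f_i$'s), \Theorem{val:gen} says that the multiplicity $\mult{F}{P}$ of $F$ in a \emph{nonzero} such combination is bounded by a quantity that depends only on the $\alpha_{i,j}$'s (through $\mu_i\alpha_{i,j}$) and on the degrees $d_i$; so if two consecutive blocks of terms are separated by a large enough gap in $F$-multiplicity, the lowest-multiplicity block cannot be cancelled by the higher ones, and $P=0$ forces each block to vanish separately.

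The main steps, in order, would be: (1) Factor each $f_i$ over $\KK$; this is polynomial in the dense sizes of the $f_i$'s by standard univariate factorization \cite{vzGaGe03}. Collect the finitely many irreducible factors $F_1,\dots,F_r$ occurring, with $r$ polynomially bounded. (2) For a chosen $F = F_s$, compute $\mu_i^{(s)} = \mult{F_s}{f_i}$ for each $i$, and sort the indices $j$ by $\sum_i \mu_i^{(s)}\alpha_{i,j}$. Using \Theorem{val:gen} exactly as \Theorem{val} is used in the proof of \Theorem{gap}, split $\{1,\dots,k\}$ into consecutive groups so that between two groups the $F_s$-multiplicities differ by more than the bound $\max_j \sum_i (\mu_i\alpha_{i,j} + (d_i-\mu_i)\binom{\ell+1-j}{2})$ permits; this yields a decomposition $P = \sum_t P^{(t)}$, compatible with $F_s$, where $P = 0$ iff every $P^{(t)} = 0$ and each $P^{(t)}$ has $F_s$-adic ``spread'' (i.e.\ the gap between largest and smallest $F_s$-multiplicity among its terms) polynomially bounded. (3) Iterate this over $s = 1,\dots,r$: a fortiori, after refining by all the $F_s$, one is left with sub-sums $P'$ in which, for \emph{every} irreducible factor $F_s$, the spread of $F_s$-multiplicities is polynomially bounded — equivalently, for each $i$ the numbers $\alpha_{i,j}$ occurring in $P'$ vary only over a polynomial-size range (since increasing some $\alpha_{i,j}$ increases $\mult{F_s}{P_j}$ for every $F_s \mid f_i$). (4) On such a sub-sum, subtract off the common factor $\prod_i f_i^{\min_j \alpha_{i,j}}$; what remains is a sum $\sum_j a_j \prod_i f_i^{\alpha_{i,j} - \min_j\alpha_{i,j}}$ where all exponents are polynomially bounded, so expanding it in the monomial basis produces a polynomial of polynomially bounded dense size, whose coefficients are sums of at most $k$ products $a_j \prod_i (\text{coefficients of } f_i)^{\ldots}$ of polynomial bit-length. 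Testing each such coefficient for zero is then straightforward, and this gives the overall polynomial running time. The non-monic case is identical except that at step (4) the common factor one subtracts is $\prod_i f_i^{\min_j\alpha_{i,j}}$ and evaluating the residual sum's ``constant'' contributions reduces to deciding nullity of expressions $\sum_j \prod_i \lambda_i^{\alpha_{i,j}}$ (with $\lambda_i$ the leading coefficients, or more precisely values of the $f_i$), which is exactly the oracle hypothesis; alternatively one normalizes $f_i$ to $f_i/\mathrm{lc}(f_i)$ and tracks the scalar $\prod_i \mathrm{lc}(f_i)^{\alpha_{i,j}}$ as a coefficient.

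The step I expect to be the main obstacle is bookkeeping the \emph{simultaneous} reduction over all irreducible factors $F_1,\dots,F_r$ in step (3), and proving that after these refinements the exponent ranges are genuinely polynomially bounded. The subtlety is that \Theorem{val:gen} gives a gap bound that mixes a term linear in the $\alpha_{i,j}$ with a term $(d_i-\mu_i)\binom{\ell+1-j}{2}$ quadratic in the (local) number of terms; one must check that grouping by $F_s$-multiplicity and controlling the number of groups behaves super-additively (as in \Definition{decomposition} and the remark following \Theorem{gap}), so that the total blow-up stays polynomial when the procedure is applied for each of the $r$ irreducible factors in turn. A second, more routine point is to confirm that the gap-detection itself — sorting the $j$'s, identifying the cut points — can be carried out in time polynomial in the \emph{bitsizes} of the $\alpha_{i,j}$'s rather than their magnitudes, exactly as in the decomposition step of \Theorem{pit}. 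Once these are in place, the final coefficient-by-coefficient nullity test is immediate from the bounds already established, possibly invoking Lenstra's algorithm \cite{Len99} (as in \Theorem{pit}) for the base-case scalar sums in the number field.
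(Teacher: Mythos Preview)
Your proposal follows essentially the same route as the paper: factor the $f_i$'s into monic irreducibles, iterate the gap decomposition coming from \Theorem{val:gen} over each irreducible factor in turn, and then expand the resulting low-degree pieces monomial by monomial.

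There is one genuine slip in step~(4). Your ``equivalently'' is false when distinct $f_i$'s share irreducible factors: take $f_1=f_2=X(X+1)$, $f_3=X+2$, and two terms with exponent vectors $(0,N,0)$ and $(N,0,0)$. Every $F_s$-multiplicity is identical across the two terms, so the spread is zero, yet $\alpha_{1,j}$ ranges over $\{0,N\}$ and your common factor $\prod_i f_i^{\min_j\alpha_{i,j}}$ is trivial. The fix---and this is precisely what the paper does---is to carry the rewriting of step~(1) all the way through: express $P_j=\Lambda_j\prod_t g_t^{\gamma_{j,t}}$ with $\gamma_{j,t}=\sum_i\mult{g_t}{f_i}\,\alpha_{i,j}$, and in step~(4) factor out $\prod_t g_t^{\min_j\gamma_{j,t}}$ rather than $\prod_i f_i^{\min_j\alpha_{i,j}}$. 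The gap decomposition with respect to $g_t$ bounds exactly the spread of the $\gamma_{j,t}$'s (by $\binom{k}{2}\sum_{t'\neq t}\deg g_{t'}$, via \Equation{pit-gap} in the paper), so the residual expansion has polynomially bounded degree. A minor additional point: in the monic case $\Lambda_j=1$, so the final coefficient tests are plain sums of at most $k$ explicit elements of $\KK$ and no call to Lenstra's algorithm is needed; the oracle enters only through the $\Lambda_j$'s in the non-monic case, exactly as you say at the end.
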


\begin{proof}
Let us suppose that the $f_i$'s are not monic. In time polynomial in the degree of the $f_i$'s, one can compute their monic irreducible factorizations. Then we can write $f_i=\lambda_i\prod_{t=1}^n g_t^{\beta_{i,t}}$ for all $i$, where the $g_t$'s are distinct monic irreducible polynomials and $\beta_{i,t}\ge 0$. Then, the polynomial $P$ can be written as
\[P=\sum_{j=1}^k \left[ a_j \biggl(\prod_{i=1}^m \lambda_i^{\alpha_{i,j}}\biggr) \biggl(\prod_{t=1}^n g_t^{\sum_i\beta_{i,t}\alpha_{i,j}}\biggr)\right]\text.\]
For all $j$ and $t$, let $\gamma_{j,t}=\sum_i\beta_{i,t}\alpha_{i,j}$ and $\Lambda_j=\prod_i\lambda_i^{\alpha_{i,j}}$. Note that the case of monic polynomials is the case where $\Lambda_j=1$.

If $n=1$, that is all the $f_i$'s are powers of a same polynomial $g$, then $P=\sum_j a_j \Lambda_j g^{\gamma_j}$. It is thus sufficient to find the subsets of indices for which $\gamma_j$ is constant, and test for zero sums of the form $\sum_j a_j \Lambda_j$. These sums can be easily tested for zero if $\Lambda_j=1$, and using the oracle otherwise.

If $n>1$, we use a Gap Theorem. To this end, we use the bound on the multiplicity of the irreducible polynomial $g_1$ given by \Theorem{val:gen}. 
Let
\[ Q=\sum_{j=1}^\ell a_j \Lambda_j \prod_{t=1}^n g_t^{\gamma_{j,t}}\text{ and } R=\sum_{j=\ell+1}^k a_j \Lambda_j \prod_{t=1}^n g_t^{\gamma_{j,t}}\]
and suppose that $\ell$ is the smallest index such that 
\begin{equation}\label{eq:pit-gap}
\gamma_{\ell+1,1}> \gamma_{1,1}+\biggl(\sum_{t>1} \deg(g_t)\biggr)\binom{\ell}{2}\text.
\end{equation}
Since $\mult{g_1}{g_t}=0$ for all $t>1$ and $\mult{g_1}{g_1}=1$, \Theorem{val:gen} implies that 
\begin{align*}
\mult{g_1}{Q}
    &\le \max_{j\le\ell}\biggl( \gamma_{1,j}+\sum_{t>1}\deg(g_t)\binom{\ell-j+1}{2}\biggr)\\
    &\le \gamma_{1,1}+\biggl(\sum_{t>1}\deg(g_t)\biggl)\binom{\ell}{2}\text.
\end{align*}
The second inequality can be proved exactly as in the proof of the Gap Theorem for linear factors. In particular, one has $\gamma_{\ell+1,1}>\mult{g_1}{Q}$. Therefore, if $Q$ is nonzero, since $\mult{g_1}{R}\ge\gamma_{\ell+1,1}$, then $P=Q+R$ is nonzero. The same argument of course works for any $g_t$.

Algorithmically, we can decompose $P$ into $Q+R$ using \Equation{pit-gap}, and recursively decompose $R$. In each polynomial of the decomposition we factor out the largest possible power of $g_1$. We obtain $P=g_1^{\gamma_{(1)}}Q_1+\dotsb+g_1^{\gamma_{(p)}}Q_p$ for some $p$ such that $P=0$ if and only if each $Q_1=\dotsb=Q_p=0$. Furthermore, the exponents of $g_1$ in $Q_1$, \dots, $Q_p$ are bounded by $\binom{k}{2}\sum_{t>1} \deg(g_t)$. 
Let us call $\{Q_1,\dotsc,Q_p\}$ the decomposition of $P$ with respect to $g_1$. 

The algorithm is then as follows. We initially consider the set $\Qcal=\{P\}$. Then for $t=1$ to $n$, we replace each member of $\Qcal$ by its decomposition with respect to $g_t$. We obtain a set $\Qcal$ of polynomials such that $P=0$ if and only if $Q=0$ for all $Q\in\Qcal$. 

It remains to test whether each polynomial of $\Qcal$ vanishes. To this end, one can expand these polynomials as sums of monomials since their degrees are polynomially bounded. The coefficients of the monomials will have the form $\sum_j a_j\Lambda_jc_{j,\delta}$ where $c_{j,\delta}$ is the coefficient of $X^\delta$ in the polynomial $\prod_t g_t^{\gamma_{j,t}}$. Therefore, it has a polynomial bitsize. Testing the nullity of these coefficients is then easy if $\Lambda_j=1$ and done using the oracle otherwise.

This proves the theorem.
\end{proof}

Note that an alternative algorithm for the same class of polynomials is given by \citet{KoiPoTa15}, 
in the case where the $f_i$'s are not monic. Yet the complexity of their algorithm is exponential in $k$ and $m$. We also remark that our algorithm uses, for non monic polynomials, an oracle to decide the nullity of a sum of products of powers of integers (or elements of a number field). The same oracle is also needed in the polynomial identity testing algorithm of \citet{GreKoiPoStr11}. This algorithm deals with a very similar family of polynomial, where the $f_i$'s are sparse polynomials instead of dense ones, but the complexity is then polynomial in the exponents $\alpha_{ij}$'s rather than in their bitsizes.

\section{Factoring lacunary polynomials} \label{sec:algo} 

We now turn to the main applications of our Gap Theorems of \Section{gap}. We detail 
how to compute linear and multilinear factors of lacunary polynomials over number fields. 
We first focus on bivariate polynomials. In \Section{nvar}, we explain that our algorithms can be easily extended to  multivariate polynomials. We discuss the case of other fields of characteristic zero in \Section{otherFields}.

The input polynomials in our algorithms are given in lacunary representation. A monomial is represented by its coefficient and its vector of exponents written in binary. 
For the representation of elements of a number field, we refer to the discussion opening \Section{pit}. The size of the lacunary representation is the sum of the sizes of the representations of the nonzero monomials. In particular, note that the size 
is polynomial in the logarithm of the degree 
when the number of terms and the coefficient heights are fixed.

As we shall see, finding binomial factors is a special case in our algorithms. To simplify the next proofs, we first prove a lemma on the computation of these factors.

\begin{lemma}\label{lemma:binomials}
Let $\KK$ be an algebraic number field and
\[P=\sum_{j=1}^k a_j X^{\alpha_j}Y^{\beta_j}\in\KK[X,Y]\text.\]
There exists a deterministic polynomial-time algorithm that computes all the multilinear binomial factors of $P$, together with their multiplicities.
\end{lemma}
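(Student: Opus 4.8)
The plan is to reduce the computation of multilinear binomial factors of $P=\sum_{j=1}^k a_j X^{\alpha_j}Y^{\beta_j}$ to two cases according to the shape of the candidate binomial. A multilinear binomial over $\KK[X,Y]$ is, up to a nonzero scalar, of one of the forms (i) $X^aY^b-c$, (ii) $X^a-cY^b$ (equivalently monomials in only one variable, or ``diagonal'' binomials), or more precisely any binomial $m_1 - c\, m_2$ where $m_1,m_2$ are distinct monomials in $X,Y$; since we only want \emph{linear} (multilinear) factors, the exponents involved are in $\{0,1\}$ in each variable, so the possible binomial factors are, up to scalars, among $X-c$, $Y-c$, $XY-c$, $X-cY$, and $XY-cX$, $XY-cY$ — but the last two are reducible, so the genuinely irreducible multilinear binomials to look for are $X-cY$, $X-c$, $Y-c$, and $XY-c$ for $c\in\KK^\ast$. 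For each of these finitely many shapes we must decide for which $c$ it divides $P$, and with what multiplicity.

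First I would handle the shape $F=X-cY$: here $F\mid P$ iff $P(cY,Y)=\sum_j a_j c^{\alpha_j} Y^{\alpha_j+\beta_j}=0$ as a polynomial in $Y$. Grouping the terms by the value of $\delta_j:=\alpha_j+\beta_j$, this vanishes iff for each achieved value $\delta$ the sum $\sum_{j:\delta_j=\delta} a_j c^{\alpha_j}=0$. Each such sum, as a function of the unknown $c$, is a lacunary univariate polynomial in $c$ with at most $k$ terms and exponents bounded by $\max_j\alpha_j$; by Lenstra's algorithm~\cite{Len99} its roots in $\KK$ can be found in polynomial time. The set of valid $c$ is the intersection over all $\delta$ of these finitely many root sets (or all of $\KK^\ast$ if some sum is identically zero, which only happens if $P$ itself has a specific structure), and in any case it is a finite, polynomially computable set. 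For the shape $F=XY-c$: substitute $Y=c/X$, clear denominators by multiplying by $X^{\max_j\beta_j}$, and one is testing whether $\sum_j a_j c^{\beta_j} X^{\alpha_j+(\max\beta)-\beta_j}$ vanishes, again a union of lacunary-in-$c$ conditions solvable by Lenstra. The shapes $X-c$ and $Y-c$ are the degenerate cases $P(c,Y)=0$ (resp. $P(X,c)=0$): for $X-c$, writing $P=\sum_j (\sum_{j':\alpha_{j'}=\alpha_j}\ldots)$, one gets conditions $\sum_{j:\beta_j=\beta} a_j c^{\alpha_j}=0$, once more handled by Lenstra.

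To get multiplicities: $F^\mu\mid P$ iff $F\mid \partial^m P/\partial X^m$ for all $m\le\mu-1$ (after multiplying by a suitable power of $X$ so that all exponents exceed $\mu$, as in the proof of \Theorem{gap}), and each derivative $\partial^m P/\partial X^m=\sum_j a_j(\alpha_j)_m X^{\alpha_j-m}Y^{\beta_j}$ is again of the same form with $k$ terms, so the same divisibility test applies; moreover the multiplicity of a linear factor of a bivariate polynomial of (bounded, but possibly large) degree is at most $\deg P$, yet since our test is a sequence of lacunary-root computations we can binary-search on $\mu$ or simply note that the relevant derivative conditions only depend on exponent differences, so that $\mu$ is bounded by the number of distinct values among the $\alpha_j$ minus something — in any case polynomially many derivative tests suffice. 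Actually the cleanest route: having fixed a valid $c$, apply the (ordinary dense) factorization machinery to the univariate specialization, but that reintroduces the degree; better is to observe that $\mult{F}{P}$ equals the valuation of the appropriate one-variable specialization, and bound it via \Theorem{val} (with the trivial ``$u$ or $v$ zero'' cases handled directly) — I would present it this way, so multiplicities come out for free from a valuation computation on a polynomial with few terms.

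The main obstacle I anticipate is bookkeeping rather than mathematics: one must be careful that the specializations $P(cY,Y)$ etc. genuinely produce lacunary univariate polynomials \emph{in the variable $c$} whose exponent sizes stay polynomial (they do, being among the $\alpha_j$), and that intersecting the finitely many candidate-$c$ sets — and correctly handling the edge case where a coefficient-sum vanishes identically in $c$ (forcing one to recurse or to realize the factor is spurious because the quotient would not be a polynomial) — is done cleanly. Also one must not forget to discard the reducible pseudo-binomials $XY-cX$, $XY-cY$ and to treat binomials supported on a single variable correctly. None of these is deep, but getting the case analysis exhaustive and the complexity bookkeeping tight is where the care goes.
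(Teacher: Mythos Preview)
Your approach is essentially the paper's: for each of the finitely many binomial shapes, substitute, group terms by the surviving monomial exponent, and hand the resulting lacunary univariate polynomials in the unknown constant to Lenstra's algorithm. The one place you overcomplicate things is multiplicities: once you have formed the groups $q_\delta$, the paper simply observes that $\mult{F}{P}=\min_\delta\mult{uZ+v}{q_\delta}$, which Lenstra's algorithm already outputs, so no derivatives, binary search, or valuation bounds are needed.
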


\begin{proof}
Let $F$ be an irreducible binomial factor of $P$. Without loss of generality, let us assume that it depends on $X$. Otherwise, invert the roles of $X$ and $Y$ in what follows. Then $F$ can be written $F=uXY^\gamma+vY^\delta$ with $u,v\in\KK$ and $\gamma+\delta\le 1$. Then $F$ divides $P$ if and only if $P(-\frac{v}{u} Y^{\delta-\gamma},Y)=0$. In other words, $F$ divides $P$ if and only if $G=uZ+v$ is a factor of the polynomial $Q(Y,Z)=Y^{\max_j\alpha_j} P(ZY^{\delta-\gamma},Y)$. One can view $Q$ as an element of $\KK[Z][Y]$ and write it as $Q=\sum_\epsilon q_\epsilon(Z) Y^\epsilon$. Then $G$ divides $Q$ if and only if it divides each $q_\epsilon$. More precisely, the multiplicity $\mult{F}{P}$ of $F$ as a factor of $P$ equals $\min_\epsilon\mult{G}{q_\epsilon}$. Therefore, it is sufficient to compute the linear factors of each $q_\epsilon$, with multiplicities, using Lenstra's algorithm~\citep{Len99}. 

To find all the multilinear binomial factors of $P$ depending on $X$, one has to apply the above algorithm with $(\gamma,\delta)=(1,0)$, $(0,1)$ and $(0,0)$. For the factors depending only on $Y$, one has to invert the roles of $X$ and $Y$ and apply to above algorithm once more.
\end{proof}

Note that in the previous algorithm, it is not efficient 
to first compute the gcd of the polynomials $q_\epsilon$ before the computation of their common linear factors since this task is $\NP$-hard to perform~\citep{Pla77}.

\subsection{Finding linear factors} \label{sec:linear} 

\begin{theorem}\label{thm:linfact}
Let $\KK$ be a number field and 
\[P(X,Y)=\sum_{j=1}^k a_j X^{\alpha_j}Y^{\beta_j}\in\KK[X,Y].\]
Then there 
exists a deterministic polynomial-time algorithm that finds all the linear factors of $P$, together with their multiplicities.
\end{theorem}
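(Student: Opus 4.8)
The plan is to reduce the problem to a bounded number of applications of Lenstra's univariate algorithm, handling separately the linear factors of the form $uX+vY+w$ with $uvw\neq 0$ and the degenerate cases where one of the coefficients vanishes. First I would dispense with the degenerate factors. A linear factor in which the constant term vanishes, i.e.\ $uX+vY$ with $uv\neq 0$, or one depending on a single variable, i.e.\ $uX+v$ or $vY+w$, is in particular a \emph{multilinear binomial} factor, so all such factors can be found with multiplicities by \Lemma{binomials}. So it remains to find the factors $F=uX+vY+w$ with $uvw\neq 0$.

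For the main case I would invoke \Theorem{gap}. After reordering so that $\alpha_1\le\dotsb\le\alpha_k$, one computes the decomposition $\Qcal=\{Q_1,\dotsc,Q_s\}$ of $P$ with respect to $X$ obtained by recursively applying the Gap Theorem: $P=X^{q_1}Q_1+\dotsb+X^{q_s}Q_s$, where each $Q_t$ has $X$-degree at most $\binom{\ell_t-1}{2}$ with $\sum_t\ell_t=k$, hence total $X$-degree at most $\binom{k-1}{2}$. This decomposition is computed in polynomial time just by scanning the sorted exponents, since the gap condition $\alpha_{\ell+1}>\alpha_1+\binom{\ell}{2}$ only involves the exponents. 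By the Gap Theorem, $\mult{F}{P}=\min_t\mult{F}{Q_t}$ for every such $F$, so $F$ is a factor of $P$ iff it divides every $Q_t$, and its multiplicity is the minimum of the multiplicities in the $Q_t$'s. Since each $Q_t$ has low degree in $X$ (polynomially bounded), and — after the symmetric argument with the roles of $X$ and $Y$ interchanged, producing a decomposition with respect to $Y$ and then refining to a common decomposition whose members have polynomially bounded degree in \emph{both} $X$ and $Y$ — we are reduced to computing the linear factors $uX+vY+w$, with multiplicities, of a bivariate polynomial of polynomially bounded degree. For a dense bivariate polynomial this is a classical task solvable in time polynomial in its degree (via a univariate factorization, e.g.\ substituting $Y\mapsto$ a new indeterminate and using Lenstra or any standard algorithm over number fields), so it runs in polynomial time in the size of the lacunary input. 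Finally, intersecting: a candidate $F$ is a factor of $P$ iff it appears as a factor of every member of the decomposition, and $\mult{F}{P}$ is the minimum of the multiplicities found; one outputs the (finite, polynomially bounded) list of factors common to all members together with these minima. Combining this list with the multilinear binomial factors from \Lemma{binomials} yields all linear factors of $P$ with multiplicities.

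The main obstacle is ensuring that the degree reduction is effective in \emph{both} variables simultaneously while keeping the bit-size polynomial. The Gap Theorem for linear factors, as stated, controls only the $X$-degree of the pieces $Q_t$; to also bound their $Y$-degree one must apply the theorem a second time with the roles of $X$ and $Y$ exchanged and then pass to a common refinement of the two decompositions, using \Remark{rk:subdecomposition} to regroup consecutive pieces so that compatibility with the set of all admissible $F$ is preserved. One must check that this common refinement still has polynomially bounded degree in each variable and that the coefficients — which are integer-combinations of the original $a_j$'s — keep polynomial bit-length, so that the subsequent dense bivariate factorization is genuinely polynomial-time. A secondary, more routine point is the bookkeeping for multiplicities: one verifies, exactly as in the proof of \Theorem{gap} via differentiation in $X$, that $\mult{F}{P}$ really is the minimum over the decomposition, and that Lenstra's algorithm (or its dense-polynomial analogue) returns multiplicities, not merely the factors.
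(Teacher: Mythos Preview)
Your proposal is correct and follows the same approach as the paper: handle binomial (and monomial) linear factors via \Lemma{binomials}, then for trinomials $uX+vY+w$ with $uvw\neq 0$ apply the Gap Theorem first in $X$ and then in $Y$ to obtain pieces $R_t$ of polynomially bounded degree in both variables, and finish by dense factorization (the paper computes $\gcd(R_1,\dotsc,R_s)$ and factors it once, which is equivalent to your intersection of factor lists). One clarification on your obstacle paragraph: the paper applies the $Y$-decomposition to \emph{each piece} $Q_t$ of the $X$-decomposition sequentially, rather than taking a ``common refinement'' of two global decompositions---the latter is not really well-defined since the two decompositions are with respect to different orderings of the monomials, and Remark~\ref{rk:subdecomposition} concerns coarsening, not refining; also, your worry about coefficient bit-size is unfounded, since the decomposition merely partitions the original monomials, so the coefficients of each $R_t$ are a subset of the original $a_j$'s, not integer combinations of them.
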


\begin{proof}
The algorithm has three distinct parts. The first part is the obvious computation of monomial factors, the second part is for binomial factors using \Lemma{binomials}, and the third part for trinomial factors.

Consider the factors of the form $F=uX+vY+w$, $uvw\neq 0$. Using recursively the Gap Theorem for linear factors (\Theorem{gap}), we can compute a decomposition of $P$ with respect to $X$ of degree at most $\binom{k-1}{2}$, compatible with the linear trinomial factors. That is, $P$ can be written $P=X^{q_1}Q_1+\dotsb+X^{q_s}Q_s$ such that $\mult{F}{P}=\min_t \mult{F}{Q_t}$, and the sum of the degrees of the $Q_t$'s is bounded by $\binom{k-1}{2}$. 
Inverting the roles of $X$ and $Y$, one can compute a compatible decomposition of each $Q_t$ with respect to $Y$. Globally, the polynomial $P$ can be expressed as
\[P=\sum_{t=1}^s X^{\alpha_{(t)}}Y^{\beta_{(t)}} R_t\]
where each $R_t$ has $\ell_t$ terms, $\sum_t\ell_t=k$, and its degree in both $X$ and $Y$ is at most $\binom{\ell_t-1}{2}$. The linear factors of $P$ are the common linear factors of all the $R_t$'s, or equivalently the linear factors of $\gcd(R_1,\dotsc,R_s)$. 
One can thus apply standard algorithms to compute this gcd and then factor it in time polynomial in $\binom{k}{2}$. Moreover, $\mult{F}{P}=\mult{F}{\gcd(R_1,\dotsc,R_s)}$. Therefore, this describes an algorithm to compute the linear trinomial factors of $P$ and their multiplicities.
\end{proof}

\subsection{Finding multilinear factors} \label{sec:multilinear} 

\begin{theorem}\label{thm:multilinfact}
Let $\KK$ be a number field and
\[P=\sum_{j=1}^k a_j X^{\alpha_j} Y^{\beta_j}\in\KK[X,Y]\text.\]
There exists a deterministic polynomial time algorithm to compute all the multilinear factors of $P$, with their multiplicities.
\end{theorem}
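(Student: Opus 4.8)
The plan is to follow the three-part structure of the proof of \Theorem{linfact}: monomial factors, binomial factors, and multilinear factors with at least three monomials. The first two parts carry over verbatim: the monomial factors $X$ and $Y$ are read off with multiplicities $\min_j\alpha_j$ and $\min_j\beta_j$, and all multilinear binomial factors with multiplicities are produced by \Lemma{binomials}. So the real content is to find every irreducible $F=uXY+vX+wY+t$ having at least three of $u,v,w,t$ nonzero, together with $\mult{F}{P}$, and I would split this according to the number of nonzero coefficients.

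When all four of $u,v,w,t$ are nonzero, I would proceed exactly as in \Theorem{linfact} but using the Gap Theorem for multilinear factors (\Theorem{gap:multilin}) in place of the one for linear factors. \Theorem{gap:multilin} handles the case $wt\neq0$, and by the $X\leftrightarrow Y$ symmetry there is a mirror version, with a gap $\beta_{\ell+1}>\beta_1+2\binom{\ell}{2}$ in the $Y$-exponents, which handles the case $vt\neq0$. Applying the first recursively with respect to $X$ gives a decomposition $P=X^{q_1}Q_1+\dotsb+X^{q_s}Q_s$ with $\deg_X Q_t\le 2\binom{\ell_t-1}{2}$ and $\mult{F}{P}=\min_t\mult{F}{Q_t}$ for the relevant $F$; applying the mirror version to each $Q_t$ with respect to $Y$ (legitimate since $vt\neq0$ as well) then yields $P=\sum_t X^{a_t}Y^{b_t}R_t$ where each $R_t$ has $\ell_t$ terms with $\sum_t\ell_t=k$, $\deg_X R_t,\deg_Y R_t\le 2\binom{\ell_t-1}{2}=O(k^2)$, and $\mult{F}{P}=\min_t\mult{F}{R_t}$. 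As in \Theorem{linfact}, the quadrinomial multilinear factors of $P$ are then exactly those of $\gcd(R_1,\dotsc,R_s)$, a polynomial of polynomially bounded degree; I would compute this gcd, factor it with standard dense algorithms, keep the irreducible factors of the form $uXY+vX+wY+t$ with $uvwt\neq0$, and report each with its multiplicity in $\gcd(R_1,\dotsc,R_s)$. Expanding the $R_t$ as dense polynomials keeps all coefficient bitsizes polynomial, exactly as in the proof of \Theorem{pit}.

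The remaining $F$ have exactly one of $u,v,w,t$ equal to zero, the trinomial case (which includes the genuinely linear trinomials, obtained when $u=0$). For these I would exploit the monomial changes of variables $X\leftrightarrow Y$, $X\mapsto1/X$ and $Y\mapsto1/Y$: applying such a substitution to $P$ and clearing denominators by a power of $X$ and/or $Y$ gives another bivariate lacunary polynomial with $k$ terms and exponents of polynomial bitsize, while the effect on a candidate $F=uXY+vX+wY+t$ is merely to permute the tuple $(u,v,w,t)$ (for instance $X\mapsto1/X$ sends it to $(w,t,u,v)$). A short check shows that, by composing at most two of these substitutions, the vanishing coefficient can always be brought into the $XY$-slot, so that $F$ becomes a \emph{linear} trinomial; that factor, with its multiplicity, is then returned by applying \Theorem{linfact} to the transformed polynomial. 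Translating the linear factors found for each of the constantly many transformed copies of $P$ back through the corresponding substitution produces all the trinomial multilinear factors of $P$ with their multiplicities. Merging the outputs of all parts and discarding duplicates gives every multilinear factor of $P$, with multiplicity.

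Most of this is a routine adaptation of \Theorem{linfact}; the one point that needs real care is the tracking of multiplicities. I expect the main obstacle to be checking that $\mult{F}{P}$ is preserved by the operations used: that composing the $X$- and $Y$-decompositions keeps $\mult{F}{P}=\min_t\mult{F}{R_t}$ (both Gap Theorems must be invoked with a compatible $F$, which is precisely why the all-four-coefficients-nonzero case is singled out), and that a monomial substitution such as $X\mapsto1/X$ leaves $\mult{F}{P}$ unchanged for an $F$ coprime to $X$ and $Y$. The latter holds because the substitution is an automorphism of the torus and $\{F=0\}$ is an irreducible curve meeting neither coordinate axis, so the order of vanishing of $P$ along it is unaffected; but this deserves a short argument rather than being immediate.
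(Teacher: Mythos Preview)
Your proposal is correct and follows the same strategy as the paper: split into monomial, binomial, and $\ge 3$-monomial factors, handle the generic multilinear case via \Theorem{gap:multilin} applied in both variables, and reduce the remaining cases to linear factors of reciprocal polynomials. The only difference is organizational: you separate quadrinomials ($uvwt\neq0$) from trinomials (exactly one coefficient zero) and reduce each of the four trinomial cases to \Theorem{linfact} via a monomial substitution, whereas the paper groups by the condition $wt\neq0$ and then treats only $t=0$ (via $P_{XY}$) and $w=0$ (via $P_X$). Your split is in fact slightly more careful, since applying \Theorem{gap:multilin} in the $Y$-direction requires $vt\neq0$, which your hypothesis $uvwt\neq0$ guarantees but the paper's hypothesis $wt\neq0$ alone does not when $v=0$.
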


\begin{proof}
As for computing linear factors, the task of computing multilinear factors can be split into three parts: computing the obvious monomial factors, computing the binomial factors using \Lemma{binomials}, and computing factors with at least three monomials. 

One can find all factors of the form $uXY+vX+wY+t$ with $wt\neq0$ using the Gap Theorem for multilinear factors (\Theorem{gap:multilin}). As for linear factors, one can compute a decomposition of $P$ with respect to both $X$ and $Y$ in the sense of \Definition{decomposition}: $P=X^{\alpha_{(1)}}Y^{\beta_{(1)}} R_1+\dotsb+X^{\alpha_{(s)}}Y^{\beta_{(s)}} R_s$ such that for all $F=uXY+vX+wY+t$  with $wt\neq0$, $\mult{F}{P}=\min(\mult{F}{R_1},\dotsc,\mult{F}{R_s})$. 
Moreover the sum of the degrees of the $R_t$'s (with respect to $X$ or $Y$) is bounded by $2\binom{k-1}{2}$. Then, finding such factors with their multiplicities is reduced to computing the multilinear factors of $\gcd(R_1,\dotsc,R_s)$ with multiplicities.

 There are two other cases to consider: $t=0$ and $w=0$.  
Both cases can be treated in the same way. Let us first concentrate on the case $t=0$. 

Suppose that $uXY+vX+wY$ divides $P$, $uvw\neq0$. Let $P_{XY}$ be 
the reciprocal polynomial of $P$ with respect to its two variables: 
\[P_{XY}=X^A Y^B P(1/X,1/Y)=\sum_{j=1}^k a_j X^{A-\alpha_j} Y^{B-\beta_j}\text,\]
where $A=\max_j \alpha_j$ and $B=\max_j\beta_j$. Then $uXY+vX+wY$ divides $P$ if and only if $u+vY+wX$ divides $P_{XY}$. Therefore, one can compute the linear factors of $P_{XY}$ with three monomials and deduce the factors of $P$ of the form $uXY+vX+wY$ from them. 

One treats in the same way the remaining case $w=0$. Let $P_X=X^AP(1/X,Y)$, we have that $uXY+vX+t$ divides $P$ if and only if $uY+v+tX$ divides $P_X$.
Therefore, this case is also reduced to the computation of linear factors.

Altogether, this gives a polynomial-time algorithm to compute multilinear factors of bivariate lacunary polynomials. 
\end{proof}

In the previous algorithm, factors with at least three monomials are computed in several steps, using different Gap Theorems. There are three cases: the general case $wt \neq 0$, and the special cases $t=0$ and $w=0$.
Each time, a first decomposition of the polynomial is computed with respect to the variable $X$, and then refined with respect to the variable $Y$. 

We aim to show that these three distinct steps can be performed in only one step. That is, one can first compute a decomposition with respect to the variable $X$ which is compatible with all the multilinear factors with at least three monomials, and then refine this decomposition with respect to the variable $Y$. 

The following proposition does not necessarily improve the running time of the algorithm and makes it more sequential but it simplifies its implementation. 
And most importantly, it will be crucial for the generalization to multivariate polynomials.

In the previous algorithms, some order on the monomials has been implicitly used. To compute a decomposition of a polynomial with respect to $X$, the monomials have to be ordered with an order compatible with the natural order on the exponents of the variable $X$. 
In particular, we assume in the next proposition that the indices are ordered such that $j<\ell$ implies $\alpha_j\le\alpha_\ell$ for all $j$ and $\ell$.

\begin{proposition}\label{prop:decomposition}
Let $P$ be as in \Theorem{multilinfact}. There exists a deterministic polynomial time algorithm to compute a decomposition of $P$ with respect to $X$, compatible with the set of all multilinear polynomials with at least three monomials, and of degree at most $3\binom{k-1}{2}$.
\end{proposition}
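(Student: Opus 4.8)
The plan is to obtain the required decomposition by applying the Gap Theorem for multilinear factors (\Theorem{gap:multilin}) recursively, but with the gap size chosen so that a \emph{single} decomposition with respect to $X$ is simultaneously compatible with all three families of factors treated separately in the proof of \Theorem{multilinfact} — the generic factors $uXY+vX+wY+t$ with $wt\neq0$, and the degenerate ones with $t=0$ or $w=0$. First I would recall that a factor of the form $uXY+vX+wY+t$ with $uvw\neq0$ (the case $u=0$ being a linear factor, already handled, and $v=0$ being a binomial-in-$Y$ factor) is detected, after the substitution $Y=-(vX+t)/(uX+w)$ and clearing denominators, by the vanishing of a univariate polynomial of the form $\sum_j a_j X^{\alpha_j}(-vX-t)^{\beta_j}(uX+w)^{B-\beta_j}$; this is exactly the shape covered by Corollary~\ref{cor:val:3terms}. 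The degenerate cases $t=0$ and $w=0$ were reduced in \Theorem{multilinfact} to linear-factor computations on the reciprocal polynomials $P_{XY}$ and $P_X$; crucially, those reductions only permute exponents and reflect them ($\alpha_j\mapsto A-\alpha_j$), so the \emph{gaps with respect to $X$} in the reflected polynomial correspond to gaps in $P$, and the relevant valuation/multiplicity bound for a linear factor in two monomials is $\binom{\ell}{2}$ (\Theorem{val}), which is dominated by the $2\binom{\ell}{2}$ bound of Corollary~\ref{cor:val:3terms}.

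The key steps, in order, are: (1) sort the monomials so that $j<\ell\Rightarrow\alpha_j\le\alpha_\ell$; (2) define a gap with respect to $X$ at index $\ell$ to occur when $\alpha_{\ell+1}>\alpha_1+2\binom{\ell}{2}$, and show — exactly as in the proof of \Theorem{gap:multilin}, using Corollary~\ref{cor:val:3terms} and the telescoping inequality $\binom{j-1}{2}+\binom{\ell-(j-1)}{2}\le\binom{\ell}{2}$ — that across such a gap no cancellation of the lowest-degree (in $X$) surviving monomial can occur, for any of the three families of factors; here one uses that the $t=0$ and $w=0$ cases, after the reciprocal/substitution trick, produce polynomials whose $X$-valuation gap is governed by the \emph{same or a smaller} threshold, so the single threshold $2\binom{\ell}{2}$ works uniformly; (3) apply this recursively to the "tail" $R$ to obtain $P=X^{q_1}Q_1+\dotsb+X^{q_s}Q_s$ with $q_{t}>q_{t-1}+\deg_X(Q_{t-1})$ and each $Q_t$ having $\ell_t$ monomials with $\sum_t\ell_t=k$ and $\deg_X(Q_t)\le 2\binom{\ell_t-1}{2}$; (4) also account for the multiplicity statement via the derivative trick of \Theorem{gap} (taking $\partial^m/\partial X^m$ and noting the gap condition depends only on exponent differences), so that the decomposition is compatible in the sense of \Definition{decomposition}, i.e.\ $\mult{F}{P}=\min_t\mult{F}{Q_t}$; (5) bound the total degree: by super-additivity of $k\mapsto\binom{k}{2}$ and $\sum_t\ell_t=k$, one gets $\sum_t\deg_X(Q_t)\le 2\binom{k-1}{2}$ from the generic case alone — but since one must also leave room to detect the degenerate factors (whose relevant bound after reflection is $\binom{\ell_t-1}{2}$ per block, summing to $\binom{k-1}{2}$), the safe uniform bound that covers refining all cases at once is $2\binom{k-1}{2}+\binom{k-1}{2}=3\binom{k-1}{2}$, matching the statement.

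The main obstacle I anticipate is step (2): carefully verifying that \emph{one} gap threshold, stated purely in terms of the $\alpha_j$'s, genuinely certifies non-cancellation for all three families at once. The generic case is immediate from Corollary~\ref{cor:val:3terms}. For $t=0$: writing $F=uXY+vX+wY$ with $uvw\neq0$, divisibility is equivalent to $u+vY+wX$ dividing $P_{XY}=\sum_j a_j X^{A-\alpha_j}Y^{B-\beta_j}$; a gap in $P$ with respect to $X$ at index $\ell$ (small $\alpha_j$'s, large $\alpha_{\ell+1}$) becomes, under $\alpha_j\mapsto A-\alpha_j$, a gap "from the top", and the relevant valuation bound for the two-monomial linear factor is the $\binom{\ell}{2}$ of \Theorem{val}, hence certainly implied by $\alpha_{\ell+1}>\alpha_1+2\binom{\ell}{2}$. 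The case $w=0$ is symmetric via $P_X=X^A P(1/X,Y)$, with $uXY+vX+t$ dividing $P$ iff $uY+v+tX$ divides $P_X$, again a two-monomial situation in $X$ bounded by $\binom{\ell}{2}$. The only real care needed is that the reflection reverses the order of the $\alpha_j$'s, so "the smallest $\ell$ with a gap" in the reflected polynomial need not be the smallest in $P$; this is handled by observing (as in Remark~\ref{rk:subdecomposition}) that any coarsening of a compatible decomposition is still compatible, so it suffices that the decomposition built from the $P$-order \emph{refines} the ones needed for each degenerate case — which it does, precisely because $2\binom{\ell}{2}\ge\binom{\ell}{2}$ term by term. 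Assembling these observations yields the decomposition with the claimed degree bound $3\binom{k-1}{2}$, and the recursion plainly runs in polynomial time since it only manipulates the $O(k)$ exponent vectors.
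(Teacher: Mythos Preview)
Your attempt has a genuine gap at exactly the point you flag as the ``main obstacle'': step~(2), showing that a \emph{single} decomposition $J$ built from below with threshold $2\binom{\ell}{2}$ is compatible with the degenerate factors ($t=0$ or $w=0$). Your argument for this is that $J$ ``refines the ones needed for each degenerate case'' and that Remark~\ref{rk:subdecomposition} then transfers compatibility. Both halves of this fail. First, Remark~\ref{rk:subdecomposition} says that \emph{coarsenings} of a compatible decomposition remain compatible, not refinements; so even if $J$ refined the degenerate-case decomposition $L$, that would not make $J$ compatible. Second, $J$ and $L$ are built from opposite ends of the exponent sequence (your own reflection $\alpha_j\mapsto A-\alpha_j$), so the inequality $2\binom{\ell}{2}\ge\binom{\ell}{2}$ compares nothing: there is no containment either way in general. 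For instance with $\alpha=(0,4,5,5)$ one gets $J=\{1,2,3\}$ and $L=\{1,2\}$, while with $\alpha=(0,0,2)$ one gets $J=\{1\}$ and $L=\{1,3\}$. In the first example your $J$ isolates the second monomial as a singleton block, and no irreducible multilinear trinomial can divide a monomial --- so if any such $P$ admits a degenerate factor $F$, your decomposition would wrongly report $\mult{F}{P}=0$.

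This also explains why your degree count in step~(5) is incoherent: if you really use only $J$, the degree is at most $2\binom{k-1}{2}$, and the extra $\binom{k-1}{2}$ has no source. The paper's proof does what your step~(5) arithmetic secretly wants: it builds \emph{both} index sets $J$ (from below, threshold $2\binom{\ell}{2}$, compatible with the generic case) and $L$ (from above, threshold $\binom{\ell}{2}$, compatible with both degenerate cases), and then takes the \emph{intersection} $J\cap L$. By Remark~\ref{rk:subdecomposition} this coarsening of each is compatible with all three families simultaneously. The nontrivial remaining work --- which your proposal does not contain --- is the telescoping estimate showing that the decomposition indexed by $J\cap L$ has $X$-degree at most $3\binom{k-1}{2}$: one writes $\alpha_k-\alpha_1$ as a sum over consecutive $J$-blocks plus the jumps at the $J$-cut points, bounds the first sum by $2\binom{k-1}{2}$ using the $J$-condition, and bounds the second by $\binom{k-1}{2}$ using the $L$-condition together with the hypothesis $J\cap L=\{1\}$.
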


\begin{proof}
A decomposition $\Qcal=\{Q_1,\dotsc,Q_s\}$ is completely determined by a set of indices: for each $Q_t$, we consider its smallest index $j_t$. 
Then $J=\{j_1,\dotsc,j_s\}\subseteq\{1,\dotsc,k\}$ determines $\Qcal$. In the following, the decompositions are represented by their corresponding set of indices. To build a unique decomposition, we take the intersection of several decompositions viewed as sets of indices. If $J$ defines a decomposition compatible with a set $\Fcal$ then by Remark~\ref{rk:subdecomposition} any subset $I\subset J$ of indices defines a decomposition of $P$ compatible with $\Fcal$. Therefore, the intersection of two decompositions, respectively compatible with sets $\Fcal$ and $\mathcal G$, is compatible with $\Fcal\cup\mathcal G$.

It remains to show that the intersection of the decompositions compatible with the different kinds of multilinear factors with at least three monomials has degree at most $3\binom{k-1}{2}$. In the following, we consider only decompositions with respect to $X$. 
The decomposition $J=\{j_1,\dotsc,j_s\}$ compatible with $uXY+vX+wY+t$, $wt\neq 0$, is defined by $j_1=1$ and for all $t\ge 2$, $j_t$ is the smallest index such that $\alpha_{j_t}-\alpha_{j_{t-1}}>2\binom{j_t-j_{t-1}}{2}$.

The decompositions compatible with $uXY+vX+wY$ and $uXY+wY+t$ are the same, that is defined by a same set of indices. Indeed, a decomposition of $P$ compatible with $uXY+vX+wY$ is obtained using the Gap Theorem for linear factors on $P_{XY}$, while a decomposition of $P$ compatible with $uXY+wY+t$ is obtained using the same Gap Theorem on $P_X$. In both cases, the first gap is given by the largest $\ell$ such that $A-\alpha_\ell>A-\alpha_k+\binom{k-\ell}{2}$, that is $\alpha_k-\alpha_\ell>\binom{k-\ell}{2}$. Since both decompositions are obtained using the same equation, they are actually equal.

It remains to prove that the decomposition induced by $J\cap L$ has degree in $X$ at most $3\binom{k-1}{2}$. To this end, we assume that $J\cap L=\emptyset$ and 
prove that $\alpha_k-\alpha_1\le3\binom{k-1}{2}$. The result then follows by super-additivity of the function $k\mapsto 3\binom{k-1}{2}$. 
Since $J\cap L=\emptyset$, each gap of $L$ falls strictly between two gaps of $J$. This implies that the degree of $J\cap L$ is bounded by the sum of the degrees of $J$ and $L$, that is $3\binom{k-1}{2}$.
\end{proof}

\subsection{Generalization to multivariate polynomials}\label{sec:nvar}

We state the generalization to multivariate polynomials only for multilinear factors. This covers in particular the case of linear factors. 

Finding multilinear factors of multivariate polynomials can be performed in three distinct steps as in the case of bivariate polynomials. The first step is the obvious computation of the monomial factors. The second step deals with binomial factors and reduces to univariate factorization. The third step reduces the computation of multilinear factors with at least three monomials to low-degree factorization.

Let us begin with the third step, which is very close to the bivariate case. 

\begin{theorem}\label{thm:multivarfact}
Let $\KK$ be an algebraic number field and
\[P=\sum_{j=1}^k a_j X_0^{\alpha_{0,j}}\dotsm X_n^{\alpha_{n,j}}\in\KK[X_0,\dotsc,X_n]\]
There exists a deterministic 
algorithm that finds all the multilinear factors of $P$ with at least three monomials, together with their multiplicities. 
Its complexity is polynomial in $k^n$. 

A randomized variant of the algorithm achieves a complexity polynomial in $k$ and $n$. Furthermore, there exists a deterministic algorithm of same complexity for computing the linear factors of $P$ only.
\end{theorem}

\begin{proof}
The idea of the algorithm is as before to compute a decomposition of $P$, compatible with the set of multilinear factors with at least three monomials. To this end, we first compute a decomposition with respect to the variable $X_0$, and then refine the decomposition using the other variables, sequentially. The computation of the decomposition for each variable is based on \Proposition{decomposition}. 

Without loss of generality, we describe the computation of the decomposition with respect to the variable $X_0$. Any irreducible multilinear polynomial $F$ with at least three monomials can be written as $F=F_0X_0+F_1$ with $F_0,F_1\in\KK[X_1,\dotsc,X_n]$ and $F_1\neq0$. First assume that $F_0\neq0$. Since $F$ has at least three monomials, at least one of $F_0$ and $F_1$ must have two monomials. If $F_0$ has at least two monomials, there exists a variable, say $X_1$, such that $F_0=uX_1+v$ for some nonzero $uv\in\KK[X_2,\dotsc,X_n]$. If $F_1$ has two monomials, $F_1=wX_1+t$ for some nonzero $w,t\in\KK[X_2,\dotsc,X_n]$. In both cases, $F$ can be viewed as a polynomial in $X_0$ and $X_1$, with at least three monomials, with coefficients in $\KK[X_2,\dotsc,X_n]$. Therefore, one can view $P$ as a bivariate polynomial in $X_0$ and $X_1$ over the field $\KK(X_2,\dotsc,X_n)$ and apply \Proposition{decomposition} to compute a decomposition of $P$ with respect to $X_0$, compatible with all multilinear polynomials with at least three monomials. In particular, this decomposition is compatible with the multilinear polynomials in $X_0$, \dots, $X_n$ over $\KK$ with at least three monomials, and such that the coefficient of $X_0$ is nonzero. Now the decomposition is also compatible with factors $F\in\KK[X_1,\dotsc,X_n]$ since such factors have to divide every coefficient of $P$ viewed as an element of $\KK[X_1,\dotsc,X_n][X_0]$.

Applying this algorithm sequentially with respect to every variable gives a decomposition of $P$ compatible with all multilinear polynomials with at least three monomials. Its total degree is at most $3\binom{k-1}{2}$ in each variable and the total number of polynomials in the decomposition is bounded by the number $k$ of terms in $P$. One can therefore compute the gcd of the polynomials in the decomposition and the irreducible factorization of the gcd using classical algorithms. Then we return the multilinear factors, together with their multiplicities.
The main computational task of this algorithm is the final irreducible factorization. Using classical algorithms, one can perform this final step in time polynomial in $k^n$ deterministically and in time polynomial in $k$ and $n$ probabilistically~\citep{Kal85}. Actually, in the final step one only needs to compute the multilinear factors of the gcd that was computed. Though no deterministic polynomial-time algorithm is known for this computation, a deterministic polynomial-time algorithm is known in the case of linear factors~\citep{Vol15}.
\end{proof}

For the computation of binomial multilinear factors, we extend \Lemma{binomials}. The main difference comes from the fact that in \Lemma{binomials} we used the same algorithm four times, 
once for each possible choice of exponents. In the case of multivariate polynomials, there is an exponential number of choices of exponents. Thus the same strategy yields an algorithm of exponential complexity in the number of variables. Therefore, one has to determine in advance a smaller number of possible vectors of exponents. The proof is inspired by the proof of~\citet[Lemma~5]{KaKoi06}. 

\begin{theorem}
Let $\KK$ be an algebraic number field and
\[P=\sum_{j=1}^k a_j X_0^{\alpha_{0,j}}\dotsm X_n^{\alpha_{n,j}}\in\KK[X_0,\dotsc,X_n]\]
There exists a deterministic polynomial-time algorithm that finds all the multilinear factors of $P$ with two monomials, together with their multiplicities.
\end{theorem}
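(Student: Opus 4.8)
The plan is to reduce the problem to a small, explicitly enumerable list of "shapes" for a multilinear binomial factor, and for each shape to reduce the divisibility test to univariate lacunary factorization via Lenstra's algorithm, exactly as in the bivariate case (\Lemma{binomials}). An irreducible multilinear binomial factor $F$ has the form $F = c_1 M_1 + c_2 M_2$ where $M_1, M_2$ are two distinct squarefree monomials in $X_0,\dotsc,X_n$ and $c_1,c_2\in\KK^\times$. Dividing by $\gcd(M_1,M_2)$, we may assume $M_1$ and $M_2$ share no variable; so $F = c_1 \prod_{i\in S} X_i + c_2 \prod_{i\in T} X_i$ for disjoint subsets $S,T\subseteq\{0,\dotsc,n\}$, with $S\cup T$ nonempty. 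The key observation — this is where the proof of~\cite[Lemma~5]{KaKoi06} comes in — is that the pair $(S,T)$ cannot be arbitrary: if $F$ divides $P$, then substituting $X_i \mapsto (-c_2/c_1)^{1/|S|}$-type values forces strong constraints, and more directly, $F\mid P$ implies that the support of $P$, reduced modulo the sublattice generated by $\sum_{i\in S} e_i - \sum_{i\in T} e_i$, must collapse many monomials together. I would extract from this the statement that $S\cup T$ must be essentially determined (up to polynomially many choices) by differences $\alpha_{\cdot,j} - \alpha_{\cdot,j'}$ of exponent vectors occurring in $P$; there are at most $\binom{k}{2}$ such difference vectors, and each one, together with a sign choice for which side is $S$ and which is $T$, yields one candidate partition $(S,T)$. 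This gives $O(k^2)$ candidate shapes rather than $2^{n}$.

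For a fixed candidate $(S,T)$, the reduction to univariate factorization proceeds as in \Lemma{binomials}. Write $F = c_1 U + c_2 V$ with $U = \prod_{i\in S}X_i$, $V = \prod_{i\in T} X_i$. Pick one variable, say $X_\iota$ with $\iota\in S$ (or $\iota\in T$ if $S=\emptyset$, i.e.\ $F$ is itself a monomial times a binomial in fewer variables — but that case is a monomial factor, handled in step one). Solving $F=0$ for $X_\iota$ gives $X_\iota = -(c_2/c_1)\, V/\prod_{i\in S\setminus\{\iota\}} X_i$, a Laurent monomial in the remaining variables with an unknown scalar $\lambda = -c_2/c_1$. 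Homogenizing to clear denominators, $F \mid P$ becomes: a linear polynomial $Z - \lambda$ (in a fresh variable $Z$ encoding the scalar) divides a certain polynomial $Q(Z, \text{other variables})$ obtained from $P$ by the monomial substitution; viewing $Q$ as an element of $\KK[Z][\text{other }X_i]$ and writing $Q = \sum_\epsilon q_\epsilon(Z)\,(\text{monomial}_\epsilon)$, divisibility holds iff $(Z-\lambda)$ divides every coefficient polynomial $q_\epsilon$, and $\mult{F}{P} = \min_\epsilon \mult{Z-\lambda}{q_\epsilon}$. Each $q_\epsilon$ is a univariate lacunary polynomial of size polynomial in the input, so Lenstra's algorithm~\cite{Len99} finds all its linear factors with multiplicities in polynomial time; intersecting the root sets across all $\epsilon$ and taking the min of multiplicities yields all valid $\lambda$, hence all factors $F$ of that shape, with multiplicities.

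The main obstacle is the combinatorial step: proving that only $O(\operatorname{poly}(k))$ partitions $(S,T)$ can possibly arise, independent of $n$. The intuition is that if $c_1 U + c_2 V$ divides $P$, then in $P$ the monomials fall into classes modulo the rank-one lattice $\ZZ(\mathbf{1}_S - \mathbf{1}_T)$, and within each class the induced univariate-in-$Z$ polynomial must be divisible by $Z-\lambda$; in particular there must exist two monomials of $P$ whose exponent-vector difference is a nonzero multiple of $\mathbf{1}_S - \mathbf{1}_T$ (otherwise $P$ restricted appropriately is a single monomial, which $Z-\lambda$ cannot divide unless $P$ itself is that monomial). Since $\mathbf{1}_S - \mathbf{1}_T$ is a $\{0,\pm1\}$-vector, it is recovered \emph{exactly} (up to overall sign, which swaps $S$ and $T$) by normalizing such a difference vector by the gcd of its entries; and there are at most $\binom{k}{2}$ difference vectors to try. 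I would need to handle carefully the degenerate cases where $S$ or $T$ is empty (monomial factors, or binomials $c_1 + c_2 V$), and the case where $F$ involves a variable not appearing in $P$ at all (then $F\mid P$ forces, after the substitution, that $P$ vanishes identically on a coordinate subspace, which again pins down the shape). Once the shape is fixed the rest is the routine reduction above, mirroring \Lemma{binomials} and the bivariate proof essentially verbatim.
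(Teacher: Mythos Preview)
Your proposal is correct and follows essentially the same approach as the paper. Both proofs first bound the number of candidate shapes $(S,T)$ (equivalently, pairs $(X^\beta,X^\gamma)$) by observing that divisibility forces some pair of monomials of $P$ to have exponent-vector difference equal to an integer multiple of the $\{0,\pm1\}$-vector $\mathbf{1}_S-\mathbf{1}_T$, and then for each candidate shape reduce to Lenstra's univariate algorithm exactly as in \Lemma{binomials}. The only substantive difference is that the paper anchors the cancellation argument at the \emph{first} monomial of $P$ (the term $j=1$ must collide with some $j\ge2$), yielding at most $k-1$ candidate pairs, whereas your all-pairs argument gives $\binom{k}{2}$; either bound is polynomial, so this does not affect the conclusion.
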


\begin{proof}
In this proof, we denote by $X$ the tuple of variables $(X_0,\dotsc,X_n)$.
The strategy is to 
compute a set of candidate pairs of monomials $(X^\beta,X^\gamma)$ such that $P$ may have a factor of the form $uX^\beta+vX^\gamma$, and then to actually compute the factors. For the first step, we write what it means for $uX^\beta+vX^\gamma$ to be a factor of $P$ and deduce conditions on $\beta$ and $\gamma$. The second step is then a reduction to finding linear factors of univariate polynomials. 

We begin with the first step. Let $F$ be an irreducible multilinear binomial. One can write 
\[F=u\prod_{i=0}^n X_i^{\beta_i} + v\prod_{i=0}^n X_i^{\gamma_i}\]
with $\beta_i,\gamma_i\in\{0,1\}$ and $\beta_i+\gamma_i\le 1$ for all $i$. Without loss of generality, let us assume that $\beta\neq0$, and let $i_0$ be an index such that $\beta_{i_0}=1$. 
Then $F$ is a factor of $P$ if and only if the polynomial
\[P(X_0,\dotsc,X_{i_0-1},-\frac{v}{u}\prod_{i\neq i_0} X_i^{\gamma_i-\beta_i},X_{i_0+1},\dotsc,X_n)\times \prod_{i\neq i_0} X_i^{A_{i_0}}\] 
vanishes, where $A_{i_0}=\max_j \alpha_{i_0,j}$. That is, $F$ divides $P$ if and only if
\[\sum_{j=1}^k a_j\left(-\frac{v}{u}\right)^{\alpha_{i_0,j}} \prod_{i\neq i_0} X_i^{\alpha_{i,j}+\alpha_{i_0,j}(\gamma_i-\beta_i)+A_{i_0}} =0\text.\]
In particular, the term for $j=1$ has to be canceled out by at least another term. In other words, there must exist $j\in\{2,\dotsc,k\}$ such that
\begin{equation}\label{eq:betagamma}
\forall i, \alpha_{i,1}+\alpha_{i_0,1}(\gamma_i-\beta_i)=\alpha_{i,j}+\alpha_{i_0,j}(\gamma_i-\beta_i)\text.
\end{equation}
Furthermore $\alpha_{i_0,1}\neq\alpha_{i_0,j}$, for it would imply that $\alpha_{i,1}=\alpha_{i,j}$ for all $i$ otherwise. Thus \Equation{betagamma} uniquely determines $\gamma_i-\beta_i$, hence uniquely determines both $\beta_i$ and $\gamma_i$. 

The variable $X_{i_0}$ does not play a special role in the previous discussion and the same reasoning applies with any variable $X_i$ such that $\beta_i\neq0$. In the same way, $\beta$ and $\gamma$ play symmetric roles. 
This means that for every $j\ge 2$, if the system defined by \Equation{betagamma} has a nonzero solution $(\beta,\gamma)$, it defines a candidate pair $(X^\beta,X^\gamma)$ and its symmetric pair $(X^\gamma,X^\beta)$. The symmetric pair is redundant and we may only consider the pair $(X^\beta,X^\gamma)$.
More precisely, for $j\ge 2$, if there exists an integer $q>0$ such that for all $i$, either $\alpha_{i,j}-\alpha_{i,1}=0$ or $\alpha_{i,j}-\alpha_{i,1}=\pm q$, then one can define $\beta_i$ and $\gamma_i$ for all $i$ as follows:
\[(\beta_i,\gamma_i)=\begin{cases}
    (0,0) & \text{if $\alpha_{i,j}-\alpha_{i,1}=0$,}\\
    (1,0) & \text{if $\alpha_{i,j}-\alpha_{i,1}=q$,}\\
    (0,1) & \text{if $\alpha_{i,j}-\alpha_{i,1}=-q$.}
\end{cases}\]
Therefore, we can compute at most $(k-1)$ candidate pairs. It remains to prove that, given a candidate pair $(X^\beta,X^\gamma)$, one can indeed compute all the factors of $P$ of the form $uX^\beta+vX^\gamma$. 

The algorithm for the second step is actually almost the same as in the proof of \Lemma{binomials}. 
Suppose that $\beta\neq0$. (Otherwise, invert $\beta$ and $\gamma$.) Let $i_0$ be any index such that $\beta_{i_0}=1$. Let 
\[Q(Y,X)=\prod_{i\neq i_0} X_i^{A_{i_0}} P(X_0,\dotsc,X_{i_0-1},Y\prod_{i\neq i_0} X_i^{\gamma_i-\beta_i},X_{i_0+1},\dotsc,X_n)\] 
where $A_{i_0}=\max_j\alpha_{i_0,j}$, viewed as an element of $\KK[Y][X_0,\dotsc,X_n]$. That is, let us write $Q=\sum_\delta q_\delta(Y) X^\delta$. 

Let $F=uX^\beta+vX^\gamma$ 
be a candidate factor. Then $F$ divides $P$ if and only if $G=uY+v$ divides $Q$, if and only if $G$ divides each $q_\delta$. More precisely, $\mult{F}{P}=\mult{G}{Q}=\min_\delta\mult{G}{q_\delta}$. Therefore, factors of the form $uX^\beta+vX^\gamma$ can be computed in deterministic polynomial-time by computing the factors of the univariate polynomials $q_\delta$, using Lenstra's algorithm~\citep{Len99}. 
\end{proof}

Note that in the previous algorithm, one could actually decrease the number of candidate pairs. To compute these pairs, we used the fact that the term for $j=1$ has to be canceled by at least another term. One could then use the same argument for every term: Each term has to be canceled by another one. Applying the same reasoning for all terms would imply more conditions on $(\beta,\gamma)$, thus potentially decrease the number of candidate pairs.

\subsection{Factorization in other fields of characteristic zero}\label{sec:otherFields} 

We have given algorithms to compute the multilinear factors of multivariate polynomials over number fields. Nevertheless, the Gap Theorems these algorithms are based on are valid over any field of characteristic zero. 
We can thus state our results in a more general form. Our algorithms are actually \emph{reductions} from the multivariate lacunary factorization problem to the univariate lacunary case on the one hand, and the multivariate low-degree case on the other hand. Such kind of reductions is fairly classical in polynomial factorization, see for instance \citep{Kal85}.

The proof of the following proposition is the same as the proof of \Theorem{multivarfact}. 

\begin{proposition}\label{prop:otherfields}
Let $\KK$ be a field of characteristic $0$, and
\[P = \sum_{j=1}^k a_j X_0^{\alpha_{0,j}}\dotsb X_n^{\alpha_{n,j}}\in\KK[X_0,\dotsc,X_n].\]
One can compute in deterministic polynomial time a polynomial $P_\text{low}$ of degree at most $3\binom{k-1}{2}$ such that the sets of multilinear factors (with multiplicities) having at least three monomials of $P$ and $P_\text{low}$ are equal. 
\end{proposition}

Using this proposition, one can obtain polynomial-time algorithms to compute the multilinear factors with at least three monomials of lacunary polynomials over several fields of characteristic $0$. The following corollary gives some examples of fields for which one obtains such algorithms.

\begin{corollary}
There exist randomized 
algorithms which given some polynomial $P\in\QQ[X_1,\dots,X_n]$, compute its multilinear factors with at least three monomials with coefficients in $\Qbar$, $\CC$ and $\QQp$, respectively. These algorithms run in time polynomial in the size of lacunary representation of $P$.
\end{corollary}

\begin{proof}
In all cases, the corollary follows from \Proposition{otherfields} using a factorization algorithm that is polynomial in the degree of the input polynomial. For references to the numerous algorithms for the factorization over $\Qbar$ or $\CC$, we refer the reader to \citep{CheGa05}. An algorithm for the factorization over $\QQp$ is due to~\citet{Chi94}. 
\end{proof}

To compute binomial factors, we need a univariate lacunary factorization algorithm. To the best of our knowledge, algorithms for this task are only known over algebraic number fields. Note that over algebraic closed fields, there is no hope of computing these factors in polynomial time: The number of univariate factors equals the degree of the lacunary polynomial and is thus exponential in the size of the input.

\subsection{Comparison with previous work} \label{sec:comparison}

We aim to compare our techniques with the ones of \citet{KaKoi05,KaKoi06}. 
Their first result deals with linear factors of bivariate lacunary polynomials over the rational numbers, while the second ones is an extension to the case of low-degree factors, with multiplicities, of multivariate lacunary polynomials over number fields. Thus, our algorithm applies to more fields: nonzero characteristic (see Section~\ref{sec:posChar}) or absolute factorization for instance. On the other hand, it is limited to multilinear factors rather than arbitrary low-degree factors.

Let us now compare the techniques used and the complexities of both algorithms. 
For the sake of simplicity, we focus on the following case: computing linear factors of bivariate polynomials over the integers or the rationals. We first briefly highlight the main differences and then perform a somewhat detailed complexity analysis of both algorithms. 

Binomial factors are computed essentially in the same way in both algorithms, using Lenstra's algorithm for univariate lacunary polynomials~\citep{Len99}.
Let us then focus on the task of finding trinomial factors. 
The high-level structure of both algorithms is the same. Both of them identify gaps in the exponent vector of the input polynomial, and use these gaps to write the polynomial as a sum of low-degree polynomials. The linear factors of the input polynomial are then the common linear factors of all these low-degree polynomials. Two differences make our algorithm simpler: First our computation of the gaps is based on purely combinatorial considerations on the exponents while one needs to compute the height of the polynomial in Kaltofen and Koiran's algorithm; Secondly our algorithm directly gives the multiplicities of the factors while their algorithm has to be applied to the successive (sparse) derivatives of the input polynomial in order to get the multiplicities.

To analyse the complexities of both algorithms, we remark that the main computational task is in both cases the final step when the common linear factors of the low-degree polynomials must be computed. For this step, one can either compute the gcd of the low-degree polynomials and its linear factors, or compute the linear factors of one of the low-degree polynomials and then use a divisibility test on the other polynomials. Whichever strategy one chooses, a relevant measure to analyse the complexity of any of the two algorithms is the size of the low-degree polynomials. Their coefficients are coefficients of the input polynomial. Their degree is related to the Gap Theorems. In our case, Theorem~\ref{thm:gap} directly shows that the low-degree polynomials have degree at most $O(k^2)$ where $k$ is the number of terms of the input polynomial. On the other hand, \citet{KaKoi05} prove a gap of size $O(\log(k)+h)$ between consecutive powers, where $h$ is the (logarithmic) height of the input polynomial, which translates into a bound $O(k(\log k+h))$ on the degree of the low-degree polynomials to be factored. In the case of integer coefficients, the height is bounded by the bitsize $b$ of the coefficients. To compute the height in the case of rational coefficients, one first needs to reduce all coefficients to the same denominator and the height is bounded by $O(kb)$. In other words, the bound on the degree of the low-degree polynomials to be factored in Kaltofen and Koiran's algorithm is $O(kb)$ (ignoring logarithmic factors) for integer coefficients and $O(k^2b)$ for rational coefficients.

The best general factorization algorithm for degree-$d$ integer bivariate polynomials with coefficients of bitsize $b$ has complexity $O(d^6b^5)$~\citep{Lec10,HaHoeNo11}. To compute the sole linear factors, the best strategy has the much lower complexity $O(d^2b)$~(see \citep{AECF} for a complete complexity analysis). For our algorithm, this means that the complexity of the factorization step is bounded by $O(k^4b)$ in the case of integer coefficients. In Kaltofen and Koiran's algorithm, the same step runs in time $O(k^2b^3)$ (ignoring logarithmic factors in $k$). Since computing the multiplicity of the factors in their approach requires to apply the same algorithm to at most $(k-1)$ derivatives of the polynomial, the total complexity is $O(k^3b^3)$. In other word, our algorithm is faster as soon as $b\ge\sqrt k$ in the case of integer coefficients. To compute the linear factors of rational polynomials, the coefficients have to be reduced to the same denominator and the coefficient size can increase by a factor $d$. That is, computing the linear factors of rational polynomials has complexity $O(d^3b)$. Using the bound of the previous paragraph and the same reasoning as above, the complexity of our algorithm becomes $O(k^6b)$ and the complexity of Kaltofen and Koiran's algorithm becomes $O(k^7b^4)$. Thus the worst-case complexity of our algorithm is better for rational coefficients.

\section{Positive characteristic}\label{sec:posChar} 

To extend the previous results to positive characteristic, one needs an equivalent of \Theorem{val}. Unfortunately, \Theorem{val} does not hold in positive characteristic. In characteristic $2$, the polynomial $(1+X)^{2^n}+(1+X)^{2^{n+1}}=X^{2^n}(X+1)$ has only two terms, but its valuation equals $2^n$. Therefore, its valuation cannot be bounded by a function of the number of terms. Note that this can be generalized to any positive characteristic. In characteristic $p$, one can consider the polynomial $\sum_{i=1}^p (1+X)^{p^{n+i}}$. 

Nevertheless, the exponents used in all these examples depend on the characteristic. In particular, the characteristic is always smaller than the largest exponent that appears. We shall show that in large characteristic, \Theorem{val} still holds and can be used to give factorization algorithms. This contrasts with the previous results that use the notion of height of an algebraic number, hence are not valid in any positive characteristic. 

In fact, \Theorem{val} holds as soon as $\wronskian(f_1,\dots,f_k)$ does not vanish. 
The difficulty in positive characteristic is that it is not true anymore that the Wronskian does not vanish as soon as $(f_j)_j$ is a linearly independent family. Consider for instance the family $f_1=1$ and $f_2=X^2$ in characteristic $2$. 
Yet, the Wronskian is still related to linear independence by the following result, see \citep{Kaplansky}:
\begin{proposition}\label{prop:positive}
Let $\KK$ be a field of characteristic $p$ and $f_1,\dots,f_k\in\KK[X]$. Then $f_1$, \dots, $f_k$ are linearly independent \emph{over $\KK[X^p]$} if and only if their Wronskian does not vanish.
\end{proposition}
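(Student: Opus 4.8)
The plan is to prove Proposition~\ref{prop:positive} by reducing it to the classical Proposition~\ref{prop:wronskian} (the characteristic-zero statement), using the structure of $p$-th powers in characteristic $p$. The key observation is that differentiation kills exactly the $p$-th powers: $(g(X^p))' = 0$ for any $g$, since $g(X^p)$ has only exponents divisible by $p$ and the derivative of $X^{pm}$ is $pmX^{pm-1} = 0$. Conversely, a polynomial with zero derivative is a polynomial in $X^p$. So the Wronskian, being built entirely out of derivatives of order $\ge 1$ in all but the first row, should only "see" the $f_j$'s modulo $\KK[X^p]$.

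Concretely, I would proceed as follows. \textbf{Direction 1 (dependence over $\KK[X^p]$ implies vanishing Wronskian).} Suppose $\sum_j g_j(X^p) f_j = 0$ for some $g_j \in \KK[X]$, not all zero. Differentiating $i$ times and using the Leibniz rule together with $(g_j(X^p))^{(t)} = 0$ for $t \ge 1$, one gets $\sum_j g_j(X^p) f_j^{(i)} = 0$ for every $i \ge 0$. This exhibits a nontrivial $\KK[X^p]$-linear combination of the columns of the Wronskian matrix equal to zero, hence (working over the fraction field, where the $g_j(X^p)$ give a nontrivial relation) the determinant vanishes. \textbf{Direction 2 (vanishing Wronskian implies dependence over $\KK[X^p]$).} This is the substantive direction. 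I would argue by induction on $k$. If $\wronskian(f_1,\dots,f_k) = 0$, then over the rational function field the columns are linearly dependent: there are rational functions, and after clearing denominators polynomials $h_1,\dots,h_k$, not all zero, with $\sum_j h_j f_j^{(i)} = 0$ for $0 \le i \le k-1$. Differentiating the $i=0$ relation and subtracting the $i=1$ relation gives $\sum_j h_j' f_j = 0$; iterating, one finds that the $h_j$ can be taken with $h_j' = 0$, i.e. $h_j \in \KK[X^p]$. The cleanest way to organize the induction is the standard Wronskian argument: pick a minimal subfamily whose Wronskian vanishes, say $f_1,\dots,f_r$ with $\wronskian(f_1,\dots,f_{r-1}) \ne 0$; then there is a relation $f_r = \sum_{j<r} c_j f_j$ with $c_j$ rational functions obtained by Cramer's rule, and differentiating this relation $i$ times for $i = 1,\dots,r-1$ and comparing with the $i=0$ case forces $\sum_{j<r} c_j' f_j^{(i)} = 0$ for $i = 0,\dots,r-2$; by the nonvanishing of $\wronskian(f_1,\dots,f_{r-1})$ this gives $c_j' = 0$, so $c_j \in \KK[X^p]$, and we have the desired relation over $\KK[X^p]$.

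The main obstacle is Direction 2, specifically the bookkeeping that turns a relation with rational-function (or merely polynomial) coefficients into one with coefficients in $\KK[X^p]$; this is where the "differentiate and subtract" trick must be applied carefully, and where one must be sure the minimality of the chosen subfamily is genuinely being used so that the smaller Wronskian is invertible over the function field. A secondary point to be careful about is that $\KK[X^p]$ is not a field, so "linear independence over $\KK[X^p]$" must be interpreted correctly (equivalently: independence over the field $\KK(X^p)$, after clearing denominators), and one should check the two notions coincide here because $\KK[X] $ is a free, in fact finitely generated, module over $\KK[X^p]$ with basis $1, X, \dots, X^{p-1}$. Alternatively — and this may be the slickest route — I would write each $f_j$ uniquely as $f_j = \sum_{m=0}^{p-1} X^m g_{j,m}(X^p)$, observe that a $\KK[X^p]$-relation among the $f_j$ is equivalent to a simultaneous $\KK$-relation among the vectors $(g_{j,0},\dots,g_{j,p-1})$, and relate the Wronskian of the $f_j$ to Wronskians of the $g_{j,m}$'s after the substitution $X^p \mapsto$ a new variable (using that $d/dX$ and $d/d(X^p)$ differ by the chain-rule factor $pX^{p-1}=0$, which forces one to instead track the lowest-order nonvanishing behavior); but I expect the induction argument above to be the safest to write down rigorously, so that is the one I would commit to, citing~\cite{Kaplansky} for the underlying classical fact.
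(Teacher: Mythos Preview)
The paper does not give its own proof of Proposition~\ref{prop:positive}; it simply states the result and cites Kaplansky~\cite{Kaplansky}. So there is nothing to compare against beyond noting that your sketch is a correct rendition of the standard argument one finds in that reference.

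For the record, your Direction~1 is fine, and your inductive Direction~2 is the classical proof: take a minimal dependent subfamily $f_1,\dots,f_r$, use the nonvanishing of $\wronskian(f_1,\dots,f_{r-1})$ to write $f_r^{(i)}=\sum_{j<r} c_j f_j^{(i)}$ for $i=0,\dots,r-1$ with $c_j\in\KK(X)$, differentiate the $i$-th relation once and subtract the $(i{+}1)$-th to obtain $\sum_{j<r} c_j' f_j^{(i)}=0$ for $i=0,\dots,r-2$, and conclude $c_j'=0$ from the invertibility of the $(r-1)\times(r-1)$ Wronskian matrix. Two small points worth tightening in a final write-up: (i) your phrase ``differentiating this relation $i$ times'' should really be ``differentiate the $i$-th column relation once and compare with the $(i{+}1)$-th'', which is what makes the bookkeeping clean; (ii) you should state explicitly that $c'=0$ for $c\in\KK(X)$ forces $c\in\KK(X^p)$ (write $c=\sum_{m=0}^{p-1} c_m X^m$ with $c_m\in\KK(X^p)$ and differentiate), after which clearing denominators gives the desired $\KK[X^p]$-relation. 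Your remark that independence over $\KK[X^p]$ and over $\KK(X^p)$ coincide because $\KK[X]$ is free over $\KK[X^p]$ with basis $1,X,\dots,X^{p-1}$ is exactly the right justification. The alternative route via the decomposition $f_j=\sum_m X^m g_{j,m}(X^p)$ that you mention can indeed be made to work, but the derivative-and-subtract argument you commit to is the standard one and is what the citation points to.
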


This allows us to give an equivalent of \Theorem{val} in large positive characteristic.

\begin{theorem}\label{thm:valPosChar}
Let $P=\sum_{j=1}^k a_j X^{\alpha_j}(1+X)^{\beta_j} \in\KK[X]$ 
with $\alpha_1\le\dotsb\le\alpha_k$. If 
the characteristic $p$ of $\KK$ satisfies
$p>\max_j(\alpha_j+\beta_j)$, then the valuation of $P$ is at most $\max_j (\alpha_j+\binom{k+1-j}{2})$, 
provided $P$ is nonzero. 
\end{theorem}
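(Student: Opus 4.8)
The plan is to carry the proof of \Theorem{val} over essentially line by line, substituting for \Proposition{wronskian} its positive-characteristic counterpart \Proposition{positive}: in characteristic $p$ the Wronskian of $f_1,\dots,f_k$ is nonzero precisely when the $f_j$'s are linearly independent \emph{over $\KK[X^p]$}. The hypothesis $p>\max_j(\alpha_j+\beta_j)$ serves only to reconcile this notion with ordinary $\KK$-linear independence for the particular polynomials $f_j=X^{\alpha_j}(1+X)^{\beta_j}$ at hand.

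First I would isolate an elementary observation: if $g_1,\dots,g_r\in\KK[X]$ all have degree less than $p$, then they are $\KK$-linearly independent if and only if they are $\KK[X^p]$-linearly independent. The non-trivial direction uses that $\KK[X]$ is free over $\KK[X^p]$ with basis $1,X,\dots,X^{p-1}$: in a hypothetical relation $\sum_i h_i(X^p)g_i=0$, expanding each $h_i$ in powers of $X^p$ and each $g_i$ in powers of $X$ produces monomials $X^{pm+s}$ with $0\le s<p$ whose exponents are pairwise distinct, so collecting coefficients yields a $\KK$-linear dependence among the $g_i$'s; the converse needs no degree hypothesis. Consequently, under the theorem's hypothesis the $f_j$'s are $\KK[X^p]$-independent as soon as they are $\KK$-independent, and then $\wronskian(f_1,\dots,f_k)\ne0$ by \Proposition{positive} — which is the only property of the family that the proof of \Theorem{val} ever uses.

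Next I would check that each ingredient of the proof of \Theorem{val} survives in characteristic $p$. \Lemma{valinf}, and the column operation replacing the first column of a Wronskian matrix by $P$ and its derivatives (which multiplies the determinant by the coefficient of the first basis vector in the expansion of $P$), involve no division and are characteristic-free. \Lemma{valsup} also goes through: for $\ell$ terms, its key identity $\wronskian=X^{\sum_j\alpha_j-\binom{\ell}{2}}(1+X)^{\sum_j\beta_j-\binom{\ell}{2}}\det(M)$ is a formal polynomial identity with integer coefficients, hence valid modulo $p$; the bound $\deg\det(M)\le\binom{\ell}{2}$ is automatic; and the required $\det(M)\ne0$ now follows from $\wronskian\ne0$ via \Proposition{positive}. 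This remains valid even if the exponents exceed $p$, since we never divide by a falling factorial. To remove the independence assumption I would extract a $\KK$-basis of the span of $f_1,\dots,f_k$, discard the basis vectors with vanishing coefficient in $P$, and apply the independent case — legitimate because a subfamily of a $\KK$-independent family of polynomials of degree $<p$ is again $\KK[X^p]$-independent. To remove the normalization $\alpha_j,\beta_j\ge\ell$ of \Lemma{valsup} I would multiply $P$ by a suitable $X^N(1+X)^N$, exactly as in \Theorem{val}; crucially, multiplication by $X^N(1+X)^N$ is an injective $\KK[X^p]$-linear map on $\KK[X]$, so it preserves $\KK[X^p]$-independence of the family even though it may raise degrees past $p$, and \Proposition{positive} still applies to the padded family. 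The index bookkeeping ($j_1<\dots<j_d\le k$, whence $\binom{d}{2}\le\binom{k+1-j_1}{2}$) is identical to that in \Theorem{val} and yields the stated bound $\max_j(\alpha_j+\binom{k+1-j}{2})$.

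The main obstacle — and the only genuine departure from characteristic zero — is the order in which these two reductions are applied: one must secure $\KK[X^p]$-independence of the $f_j$'s while the degree bound $\alpha_j+\beta_j<p$ is still in force, that is, before any padding, and only afterwards invoke stability of $\KK[X^p]$-independence under multiplication by a fixed nonzero polynomial. Getting this sequencing right is exactly what lets the mild hypothesis $p>\max_j(\alpha_j+\beta_j)$ suffice, rather than the much stronger bound a careless use of the padding trick would demand.
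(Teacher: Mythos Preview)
Your proposal is correct and follows essentially the same approach as the paper: both reduce to the characteristic-zero argument by showing that, for polynomials of degree less than $p$, $\KK$-linear independence coincides with $\KK[X^p]$-linear independence, so that \Proposition{positive} guarantees a nonvanishing Wronskian on a suitable basis. Your treatment is in fact more careful than the paper's about the padding step---the paper simply writes ``use the same argument as for the characteristic zero'' without addressing that the padded exponents may exceed $p$, whereas you explicitly note that multiplication by a fixed nonzero polynomial is an injective $\KK[X^p]$-linear map and hence preserves $\KK[X^p]$-independence, which cleanly closes that gap.
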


\begin{proof}
Let 
$f_j=X^{\alpha_j}(1+X)^{\beta_j}$ for $1\le j\le k$. 
The proof of \Theorem{val} has two steps: We prove that we can assume that the Wronskian of the $f_j$'s does not vanish, and under this assumption we get a bound of the valuation of the polynomial. The second part only uses the non-vanishing of the Wronskian and can be used here too. We are left with proving that the Wronskian of the $f_j$'s can be assumed to be nonzero when the characteristic is large enough.

Assume that the Wronskian of the  $f_j$'s 
is zero: 
By \Proposition{positive}, there is a vanishing linear combination of the $f_j$'s 
with coefficients $b_j$ in
$\KK[X^p]$. Let us write $b_j= \sum b_{i,j}X^{ip}$. Then  $\sum_i X^{ip}\sum_j b_{i,j}f_j=0$.
Since $\deg f_j=\alpha_j+\beta_j<p$, $\sum_j b_{i,j}f_j=0$ for all $i$.
We have thus proved that there is a linear combination of the $f_j$'s equal to zero with coefficients
in $\KK$. Therefore, we can assume we have a basis of the $f_j$'s whose Wronskian
is nonzero and use the same argument as for the  characteristic zero.
\end{proof}

Based on this result, the algorithms we develop in characteristic zero for PIT and factorization can be used for large enough characteristics. 
Computing with lacunary polynomials in positive characteristic has been shown to be hard in many cases~\citep{vzGaKaShp96,KaShp99,KiSha99,KaKoi05,BiCheRo12,KaLe13}. In particular, \citet{BiCheRo12} have recently shown that it is $\NP$-hard to find roots in $\FFp$ for polynomials over $\FFp$. 

Let $\FFps$ be the field with $p^s$ elements for $p$ a prime number and $s>0$. It is represented as $\FFp[\xi]/\langle\varphi\rangle$ where $\varphi$ is a monic irreducible polynomial of degree $s$ with coefficients in $\FFp$. As for number fields, $\varphi$ can be given as input of the algorithms, and a coefficient $c\in\FFps$ is represented by a polynomial of degree smaller than $\deg(\varphi)$.

\begin{theorem}
Let $\FFps$ be a finite field, and
\[P=\sum_{j=1}^k a_j X^{\alpha_j}(uX+v)^{\beta_j} \in\FFps[X]\text,\] 
where $p>\max_j(\alpha_j+\beta_j)$.
There exists a 
deterministic algorithm to test if $P$ vanishes identically, 
whose complexity is polynomial in the bitsizes of the $\alpha_j$'s and $\beta_j$'s, in $k$ and $s\log p$.
\end{theorem}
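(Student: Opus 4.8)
The plan is to run, essentially verbatim, the algorithm behind \Theorem{pit}, replacing every use of \Theorem{val} by \Theorem{valPosChar}, and then to check that the characteristic hypothesis $p>\max_j(\alpha_j+\beta_j)$ survives each reduction step. Over $\FFps$ two subtasks that needed Lenstra's univariate algorithm in the number‑field case become completely elementary, so the resulting algorithm is in fact simpler than in characteristic zero. Note that for pure identity testing only the ``divisibility'' half of the Gap Theorem for linear factors is needed, not its refinement to multiplicities.

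First I would normalise: reorder the terms so that $\alpha_1\le\dots\le\alpha_k$ and divide by $X^{\alpha_1}$ to assume $\alpha_1=0$. If $u=0$ (resp.\ $v=0$) then $P$ is a sum of monomials whose coefficient on $X^{\alpha}$ is $\sum_{j:\alpha_j=\alpha}a_jv^{\beta_j}$ (resp.\ $\sum_{j:\alpha_j+\beta_j=\alpha}a_ju^{\beta_j}$); over $\FFps$ each power $v^{\beta_j}$ or $u^{\beta_j}$ is computed by fast exponentiation and each such sum tested for zero in polynomial time, so these cases are immediate — this is exactly the place where the number‑field argument had to call Lenstra's algorithm.

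For the main case $uv\neq0$, write $uX+v=v\bigl(1+(u/v)X\bigr)$ and substitute $Z=(u/v)X$, reducing to $P=\sum_j a_j'\,Z^{\alpha_j}(1+Z)^{\beta_j}$ with $\max_j(\alpha_j+\beta_j)$ unchanged, hence still $<p$. I would then invoke the Gap Theorem for linear factors in characteristic $p$: its proof is the proof of \Theorem{gap} word for word, with \Theorem{valPosChar} in place of \Theorem{val}, and the one point to verify is that the hypothesis of \Theorem{valPosChar} applies to the auxiliary polynomial $Q^\star$ occurring there. But rewriting $Q^\star$ in the normal form $\sum b_j Z^{a_j}(1+Z)^{b_j}$ only introduces exponents $a_j=\alpha_j$, $b_j=\beta_j$ taken from a subset of the original terms, so $\max(a_j+b_j)\le\max_j(\alpha_j+\beta_j)<p$ continues to hold, and the same bound $\val(Q^\star)\le\alpha_1+\binom{\ell}{2}$ as in characteristic zero follows under the gap condition $\alpha_{\ell+1}>\alpha_1+\binom{\ell}{2}$. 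Applying this recursively produces a decomposition $P=Z^{q_1}Q_1+\dots+Z^{q_s}Q_s$ with $P\equiv0$ if and only if each $Q_t\equiv0$, where $Q_t=\sum_{j\in S_t}c_j Z^{a_j}(1+Z)^{\beta_j}$ and $a_j\le\binom{|S_t|-1}{2}\le\binom{k-1}{2}$ for all $j\in S_t$.

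Finally, for each $Q_t$ I would apply the change of variable $W=1+Z$, i.e.\ $Z=W-1$, turning $Q_t$ into $\sum_{j\in S_t}c_j(W-1)^{a_j}W^{\beta_j}$; since the $a_j$ are small, expanding $(W-1)^{a_j}$ by the binomial theorem yields only $O(k^3)$ monomials in $W$ in total, and the coefficient of each $W^m$ is a sum of at most $k$ explicitly computable elements of $\FFps$ (the binomial coefficients reduced modulo $p$, via Lucas' theorem if their arguments are large). Testing each such coefficient for zero decides whether $Q_t\equiv0$, and $P\equiv0$ iff every $Q_t$ vanishes; all steps run in time polynomial in $k$, $\log p^s$ and the bit‑sizes of the $\alpha_j,\beta_j$ and of $\varphi$. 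The only genuinely delicate point is the bookkeeping ensuring that the hypothesis ``$p$ exceeds every relevant exponent sum'' is preserved through the recursive applications of the Gap Theorem; once that is checked, the whole argument is a transcription of the characteristic‑zero proof.
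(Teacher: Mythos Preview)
Your proposal is correct and follows essentially the same approach as the paper's own proof sketch: replace \Theorem{val} by \Theorem{valPosChar} inside the argument of \Theorem{pit}, handle the degenerate cases $u=0$ or $v=0$ by direct evaluation via fast exponentiation (in place of Lenstra's algorithm), and compute the binomial coefficients modulo $p$ when testing the final coefficients for zero. Your explicit normalisation to the $(1+Z)$ form and your verification that the hypothesis $p>\max_j(\alpha_j+\beta_j)$ is preserved under each reduction are slightly more detailed than the paper's sketch, but the underlying algorithm and the logical structure are the same.
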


\begin{proof}[Proof idea]
The proof of this theorem is very similar to the proof of \Theorem{pit}, using \Theorem{valPosChar} instead of \Theorem{val}. 
The main difference 
occurs when $u=0$ or $v=0$. In these cases, we rely in characteristic zero on Lenstra's algorithm to test sums of the form $\sum_j a_jv^{\beta_j}$ for zero. There is no equivalent of Lenstra's algorithm in positive characteristic, but these tests are actually much simpler. These sums can be evaluated using repeated squaring in time polynomial in $\log(\beta_j)$, that is polynomial in the input length.

The basic operations in the algorithm are operations in the ground field $\FFp$. Therefore, the result also holds if bit operations are considered. The only place where computations in $\FFps$ have to be performed in the algorithm is the tests for zero of coefficients of the form $\sum_j\binom{\alpha_j}{\ell_j} a_j u^{-\alpha_j}(-v)^{\ell_j}$ where the $\alpha_j$'s and $\ell_j$'s are integers and $a_j\in\FFps$, and the sum has at most $k$ terms. The binomial coefficient is to be computed \emph{modulo} $p$ using for instance Lucas' Theorem~\citep{Lucas1878}.
\end{proof}

Note that the condition $p>\max_j(\alpha_j+\beta_j)$ means that $p$ has to be greater than the degree of $P$. This condition is a fairly natural condition for many algorithms dealing with polynomials over finite fields, especially prime fields, for instance for root finding algorithms~\citep{BiCheRo12}.

We now turn to the problem of factoring lacunary polynomials with coefficients in fields of large characteristic. We state it in the most general case of finding multilinear factors of multivariate polynomials.

\begin{theorem}\label{thm:FactorPosChar}
Let $\FFps$ be the field with $p^s$ elements, and 
\[P=\sum_{j=1}^k a_j X_0^{\alpha_{0,j}} \dotsb X_n^{\alpha_{n,j}}\in\FFps[X_0,\dotsc,X_n]\text,\]
where $p>\max_j(\alpha_j+\beta_j)$.  There exists a probabilistic polynomial-time algorithm to find all the multilinear factors of $P$ 
with at least three monomials, together with their multiplicities.

On the other hand, deciding whether $P$ has a binomial factor is $\NP$-hard under randomized reductions. More precisely, for every pair of relatively prime multilinear monomials $(X^\beta,X^\gamma)$, deciding whether there exist nonzero $u$ and $v$ such that $uX^\beta+vX^\gamma$ divides $P$ is $\NP$-hard under randomized reductions.
\end{theorem}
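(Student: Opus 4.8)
The plan is to split the statement into its two halves: the positive algorithmic result (finding multilinear factors with at least three monomials in large characteristic) and the hardness result (binomial factors are $\NP$-hard under randomized reductions). For the algorithmic part, I would essentially replay the characteristic-zero machinery developed in \Section{algo}, but now invoking \Theorem{valPosChar} in place of \Theorem{val} (and its multilinear analogue in place of \Corollary{cor:val:3terms}, which goes through verbatim since the degree-of-$\det(M)$ argument only used that the Wronskian is nonzero, and \Proposition{positive} guarantees this once the characteristic exceeds the degree). Thus the Gap Theorems for linear and multilinear factors hold in characteristic $p>\max_j(\alpha_j+\beta_j)$, and one computes a decomposition of $P$ with respect to each variable sequentially, reducing to a gcd computation and the factorization of a polynomial whose degree is polynomially bounded (at most $3\binom{k-1}{2}$ in each variable). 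The one genuine difference from the number-field case: the low-degree factorization subroutine and the linear-factor-finding step must be replaced by algorithms over $\FFps$. Dense polynomial factorization over finite fields is only \emph{randomized} polynomial time (Berlekamp / Cantor--Zassenhaus), which is exactly why the positive result is stated as probabilistic rather than deterministic; binomial factors are excluded here precisely because there is no $\FFps$-analogue of Lenstra's algorithm for linear factors of \emph{lacunary} univariate polynomials (and by the second half, there cannot be one unless $\P=\NP$). I would also note that the nullity tests for the expanded coefficients $\sum_j a_j(\cdots)$ are carried out by repeated squaring modulo $p$ and Lucas' theorem for the binomial coefficients, as in the PIT theorem above.

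For the hardness part, the strategy is a reduction from a known $\NP$-hard problem about univariate lacunary polynomials over finite fields, namely detecting a linear (degree-one) factor — equivalently, a root in $\FFp$ — which is $\NP$-hard by Bi--Cheng--Rojas~\cite{BiCheRo12} (and the earlier~\cite{KiSha99}). The key observation is that asking whether $uX^\beta+vX^\gamma$ divides $P$, for a \emph{fixed} relatively prime pair of multilinear monomials, is, after the substitution used in the binomial-factor algorithms (set one distinguished variable to $-\frac vu$ times a monomial in the others, clear denominators), equivalent to asking whether the linear polynomial $uY+v$ divides a certain univariate lacunary polynomial $q(Y)$ built from $P$ — i.e.\ whether $-v/u$ is a root of $q$. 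So given a univariate lacunary $f\in\FFp[Y]$ for which we want to decide the existence of an $\FFp$-root, I would construct a multivariate lacunary $P$ and a pair $(X^\beta,X^\gamma)$ such that the corresponding $q(Y)$ is (a power-of-$Y$ multiple of) $f(Y)$, so that $P$ has a factor of the prescribed shape iff $f$ has a root in $\FFp$. Since the reduction in~\cite{BiCheRo12} is itself randomized, the resulting hardness is $\NP$-hardness under randomized reductions.

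The main obstacle I expect is not the high-level structure but making the reduction in the second half genuinely \emph{use a fixed pair} $(X^\beta,X^\gamma)$: the problem as stated quantifies over $u,v$ only, with the monomial shape prescribed, so one must arrange that the only binomial factors that could possibly occur are of exactly that shape and correspond bijectively to roots of $f$. This likely requires padding $P$ with auxiliary variables/exponents (as in the proof of~\cite[Lemma~5]{KaKoi06}, and as in Plaisted-style constructions~\cite{Pla77}) so that the exponent arithmetic in \Equation{betagamma} forces $(\beta,\gamma)$ and rules out spurious binomial factors, while keeping $p>\deg P$ — one may need to pass to $\FFps$ or enlarge $p$ to respect the degree bound. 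A secondary point to check is that all of this stays polynomial-time and that the randomness in the Bi--Cheng--Rojas reduction composes cleanly with ours. I would also remark, as a sanity check on the ``at least three monomials'' hypothesis in the positive half, that binomial factors genuinely cannot be handled — otherwise the hardness result would collapse — which is the analogue here of the absolute-factorization obstruction discussed in \Section{otherFields}.
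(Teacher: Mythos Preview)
Your proposal is correct and, for the algorithmic half, matches the paper exactly: rerun the characteristic-zero decomposition using \Theorem{valPosChar} (and its multilinear analogue), then hand the low-degree gcd to a probabilistic dense factorization routine over $\FFps$.

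For the hardness half you have the right reduction (from root-detection for lacunary univariate polynomials over finite fields), but the ``main obstacle'' you anticipate is illusory, and the paper's argument is considerably simpler than the padding / Plaisted-style construction you sketch. The point is that the question asks only whether \emph{some} $uX^\beta+vX^\gamma$ divides $P$ for the \emph{given} pair $(\beta,\gamma)$; you do not need $P$ to have no binomial factors of other shapes, so there is nothing to ``rule out'' and no need to invoke \Equation{betagamma} or auxiliary variables. The paper simply takes the univariate lacunary $Q$ of degree $d$ and sets
\[
P(X_0,\dotsc,X_n)\;=\;(X^\beta)^d\,Q\!\bigl(X^{\gamma-\beta}\bigr),
\]
which is a genuine polynomial because $\beta,\gamma$ are coprime and multilinear. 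Performing the same substitution used in the binomial-factor algorithm (replace $X_{i_0}$ by $-\tfrac{v}{u}\prod_{i\neq i_0}X_i^{\gamma_i-\beta_i}$ and clear denominators) collapses $P$ to a monomial times $Q(-u/v)$; hence $uX^\beta+vX^\gamma$ divides $P$ if and only if $-u/v$ is a root of $Q$. That is the whole reduction. Your worry about respecting the degree bound $p>\deg P$ is reasonable but already absorbed by the source problem: root-detection in $\FFp$ is hard for instances with $\deg Q<p$ (exponents can be reduced modulo $p-1$ for nonzero roots without changing the answer), and the construction preserves this bound up to the constant $\max(|\beta|,|\gamma|)$.
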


\begin{proof}
The second part of the theorem is the consequence of the $\NP$-hardness (under randomized reductions) of finding roots in $\FFps$ of lacunary univariate polynomials with coefficients in $\FFps$~\citep{KiSha99,BiCheRo12,KaLe13}: Let $Q$ be a lacunary univariate polynomial over $\FFps$, and let $d=\deg(Q)$. Let us define $P(X_0,\dotsc,X_n)=(X^\beta)^d Q(X^{\gamma-\beta})$ where $X^{\gamma-\beta}=\prod_i X_i^{\gamma_i-\beta_i}$. Then $P$ is a polynomial. We aim to show that $Q$ has a nonzero root if and only if $P$ has a binomial factor of the form $uX^\beta+vX^\gamma$. Let $F=uX^\beta+vX^\gamma$. Without loss of generality, we can assume that $\beta\neq0$ and $\beta_0=1$. Then $F$ divides $P$ if and only if 
\[(X^\gamma)^{\max_j\alpha_{0,j}} P\left(-\frac{v}{u}\prod_{i>0} X_i^{\gamma_i-\beta_i},X_1,\dotsc,X_n\right)=0\text.\]
Let $X^\delta=(X^\gamma)^{\max_j\alpha_{0,j}} (X^\beta)^d$. Since $\beta_0=1$ and $\gamma_0=0$, the previous equality is equivalent to
\[X^\delta Q\left(\left(-\frac{v}{u}\prod_{i>0} X_i^{\gamma_i-\beta_i}\right)^{-1} \prod_{i>0} X_i^{\gamma_i-\beta_i}\right)=X^\delta Q(-\frac{u}{v})=0\text.\]
In other words, this is equivalent with the fact that $-u/v$ is a root of $Q$. Deciding whether $uX^\beta+vX^\gamma$ divides $P$ is thus $\NP$-hard under randomized reductions.

For the first part, the algorithm we propose is actually the same as in characteristic zero (\Theorem{linfact}). This means that it relies on known results for factorization of dense polynomials. Yet, the only polynomial-time algorithms known for factorization in positive characteristic are probabilistic~\citep{vzGaGe13}. Therefore our algorithm is probabilistic and not deterministic as in characteristic zero. 
\end{proof}

\end{document}